%% file: main.tex
\documentclass[11pt]{article}

\title{Connectivity Oracles Under Vertex Failures \\
via a Simple and Fast Low-Degree Steiner Forest Decomposition}

\author{Sayan Bhattacharya\thanks{University of Warwick, UK.} \and Ermiya Farokhnejad\thanks{University of Warwick, UK.} \and Thatchaphol Saranurak\thanks{University of Michigan, USA.
        } \and Haoze Wang\thanks{Peking University, China.}}

\date{}

\input{commands}

\begin{document}

\maketitle
\pagenumbering{gobble}
\input{abstract}
\newpage
\tableofcontents
\addtocontents{toc}{\protect\setcounter{tocdepth}{2}}
\newpage
\pagenumbering{arabic}

\input{introduction}
\input{Steiner-Forest-Decom}

\input{hierarchy}

\input{connctivity-oracle}

\section*{Acknowledgments}
We thank Yaowei Long for a very helpful discussion about connectivity oracles under vertex failures.

Sayan Bhattacharya is funded by the European Union (ERC grant, DYNALP, 101170133). Views and opinions expressed are however those of the author(s) only and do not necessarily reflect those of the European Union or the European Research Council Executive Agency. Neither the European Union nor the granting authority can be held responsible for them. 

Thatchaphol Saranurak is supported by NSF Grant CCF-2238138 and a Sloan Fellowship.

\bibliographystyle{alpha}
\bibliography{references}

\appendix
\input{appendix-lower}

\input{appendix-pseudocodes}

\end{document}

%% file: commands.tex
\usepackage[linesnumbered,ruled,vlined]{algorithm2e}
\usepackage{multirow}
\usepackage{fancyhdr}
\usepackage{blindtext}
\usepackage{amsthm}
\usepackage{amsmath}
\usepackage{amssymb}
\usepackage{thmtools}
\usepackage{thm-restate}
\usepackage{anyfontsize}
\usepackage{xstring}
\usepackage{xargs}
\usepackage{makeidx}
\usepackage{cancel}
\usepackage{chngcntr}
\usepackage{caption}
\usepackage{subcaption}
\usepackage{float}
\usepackage{pgf,tikz,pgfplots}
\usepackage{latexsym}
\usepackage{verbatim}
\usepackage{bookmark}
\usepackage[toc]{appendix}
\usepackage[letterpaper, margin=1in]{geometry}

\usepackage{cleveref}

\usepackage{listings}
\usepackage{hyperref}

\hypersetup{
	colorlinks=true,
	linkcolor=blue,
	filecolor=magenta,      
	urlcolor=cyan,
}

\usepackage{algorithmicx}
\usepackage{algpseudocode}
\usepackage[dvipsnames]{xcolor}

\usepackage{bidicontour}

\newtheorem{theorem}{Theorem}[section]
\newtheorem{lemma}[theorem]{Lemma}

\newtheorem{corollary}[theorem]{Corollary}

\newtheorem{definition}[theorem]{Definition}

\newtheorem{observation}[theorem]{Observation}
\newtheorem{claim}[theorem]{Claim}

\hypersetup{ colorlinks, citecolor=blue}

\PassOptionsToPackage{unicode}{hyperref}

\usepackage{framed}
\usepackage[framemethod=tikz]{mdframed}
\usepackage{tikz-cd}

\newcommand{\calT}[0]{\mathcal{T}}

\newcommand{\calC}[0]{\mathcal{C}}

\newcommand{\red}[1]{{\color{red}#1}}

\newcommand{\new}[0]{\text{new}}

\newcommand{\final}[0]{\text{final}}

\newcommand{\CTP}[0]{\textsc{CutTreePair}}

\usepackage{ifthen}
\newboolean{debug}
\setboolean{debug}{false}

%% file: abstract.tex
\begin{abstract}
    
We study the \emph{low-degree Steiner forest decomposition}. Given a graph $G=(V,E)$ and a terminal set $U\subseteq V$, the standard decomposition returns a set $X\subseteq V$ of size at most $|U|/2$ and a forest $T\subseteq G-X$ of maximum degree $\Delta$ such that, for every connected component $C$ of $G-X$, some connected component of $T$ contains all terminals in $U\cap V(C)$. This is the central decomposition behind several connectivity oracles under vertex failures \cite{DP20,LS22,long2024better}. The state-of-the-art algorithms either take $O(mn\log n)$ time with degree bound $4$ \cite{DP20}, or take $m^{1+o(1)}$ time with the weaker degree bound $O(\log^{2}n)$ \cite{long2024better}.

We show that if $T$ is allowed to contain vertices of $X$, then a degree-$4$ decomposition can be computed by a very simple algorithm in $O(m\alpha(n))$ time. Further, we show that this relaxed decomposition is equally useful for constructing connectivity oracles under vertex failures.
As a consequence, we obtain a deterministic connectivity oracle under $d$ vertex failures with $\tilde{O}(m)$ space, $\tilde{O}(md_\star)$ preprocessing time ($d_\star$ is an upper bound on the number of failed vertices), $\tilde{O}(d^{2})$ update time, and $O(d)$ query time. Up to polylogarithmic factors, this oracle strictly improves all known oracles; in particular, it removes the $n^{o(1)}$ factors from the preprocessing and update times of \cite{LS22,long2024better}.
\end{abstract}

%% file: introduction.tex
\section{Introduction}

Low-degree trees are robust certificates of connectivity under vertex failures. If a tree has maximum degree $\Delta$, deleting $d$ vertices creates at most $1+\Delta d$ connected components. This simple observation underlies most connectivity oracles under vertex failures. However, for a given graph $G=(V,E)$, a low-degree Steiner tree spanning a given terminal set $U\subseteq V$ might not always exist.

\paragraph{Low-Degree Steiner Forest Decomposition.}
This motivates a \emph{low-degree Steiner forest decomposition}. It removes a set $X\subseteq V$ with $|X|\le |U|/2$ and constructs a forest $T\subseteq G-X$ of maximum degree at most $\Delta$ such that, for every connected component $C$ of $G-X$, one connected component of $T$ contains all terminals in $U\cap V(C)$. This decomposition is the core structure in several connectivity oracles under vertex failures \cite{DP20,LS22,long2024better}. Roughly, these oracles exploit the low-degree trees which are robust against vertex failures and handle $X$ recursively with logarithmic depth since $|X| \leq |U|/2$.

\paragraph{History.}

Duan and Pettie \cite{DP20} first showed how to compute this decomposition in $O(mn\log n)$ time with degree bound $\Delta=4$, which is optimal up to a constant. Their algorithm uses F{\"u}rer and Raghavachari's approximation algorithm for the minimum-degree Steiner tree \cite{furer1994approximating} as a subroutine.\footnote{The recent faster algorithms for computing minimum-degree spanning trees \cite{bhattacharya2026additive,bhattacharya2026minimum} likely extend to minimum-degree Steiner trees. Interestingly, unlike \cite{furer1994approximating}, they do not seem directly useful for computing low-degree Steiner forest decompositions.}

Later, Long and Saranurak \cite{LS22} reduced the low-degree Steiner forest decomposition to a vertex-expander decomposition. Using the cut-matching-game framework, they obtained a faster, $m^{1+o(1)}$-time algorithm, but with the weaker degree bound $n^{o(1)}$. More recently, Long and Wang \cite{long2024better} showed that a relaxed variant of the cut-matching game, which does not produce a vertex-expander decomposition, still yields a low-degree Steiner forest decomposition with the degree bound $O(\log^{2}n)$ while retaining $m^{1+o(1)}$ running time.

This development leads to a natural question:
\begin{center}
\emph{Can we compute a low-degree Steiner forest decomposition with degree $O(1)$ in near-linear time?}
\par\end{center}

An affirmative answer would immediately improve connectivity oracles under vertex failures.

\subsection{Our Result}

We show that, by allowing $T$ to contain vertices in $X$, a very fast construction is indeed possible and is strong enough to yield a new connectivity oracle under vertex failures that subsumes all known results up to polylogarithmic factors. 

\begin{theorem}[Weak Low-Degree Steiner Forest Decomposition]\label{thm:main-decomp}
There is a simple, deterministic algorithm that, given a graph $G=(V,E)$ and a terminal set $U\subseteq V$, computes $(X,T)= \CTP(G, U)$ in $O((m+n)\alpha(n))$ time\footnote{The inverse-Ackermann factor in the running time comes from the standard disjoint-set union data structure. The running time can be replaced by $O(m+n\log n)$ using elementary implementations with lists (see Theorem 21.1 of \cite{CLRS}).} where
\begin{enumerate}
\item\label{prop:decomp-size} a set $X\subseteq V$ with $|X|\leq |U|/2$; and
\item\label{prop:decomp-forest} a forest $T\subseteq G$ with maximum degree $4$ such that, for every connected component $C$ of $G-X$, some tree $T_C$ of $T$ spans $U\cap V(C)$.
\end{enumerate}
\end{theorem}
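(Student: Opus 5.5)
The plan is to run a Kruskal-style greedy merging of \emph{clusters}, each an (initially trivial) tree containing at least one terminal. The forest $T$ grows edge by edge, and a vertex is declared \emph{saturated}, and put into $X$, once its degree budget is exhausted. The size bound comes from charging saturation to merges: every merge reduces the number of terminal-containing clusters by one, so at most $|U|-1$ merges ever happen. Connectivity comes from the fact that the algorithm only stops when no merge can be performed without touching a saturated vertex. Because $T$ may use vertices of $X$, we never have to undo work when a vertex becomes saturated. This is exactly where we gain over \cite{DP20}: there the vertices of $X$ must be cut out of $T$, which forces the repeated F\"urer--Raghavachari local improvements.

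\textbf{Step 1 (algorithm).} Initialise every terminal as its own cluster, and keep clusters in a disjoint-set union structure. Scan the edges and the vertices not yet in any cluster, performing two kinds of merges.
\begin{enumerate}
\item[(a)] An edge $uv$ joins two different clusters and both $u,v$ are unsaturated. Add $uv$ to $T$.
\item[(b)] A connected region of non-clustered vertices touches two different clusters through unsaturated attachment vertices. Add to $T$ a path through the region joining them; its internal vertices get degree $2$, and the region becomes part of the merged cluster.
\end{enumerate}
Every merge raises the $T$-degree of exactly two vertices that already lay in clusters, namely the attachment endpoints, each by one. Internal path vertices start with degree $2$. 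Let $\mathrm{inc}(v)$ be the number of merge increments received by $v$ while it lies in a cluster. A vertex becomes saturated, and joins $X$, once $\mathrm{inc}(v)$ reaches the threshold that would make $\deg_T(v)=4$ for its type. Its degree is then never increased again, so $\Delta(T)\le 4$. Each edge is inspected $O(1)$ times with $O(1)$ DSU operations (non-clustered regions can also be tracked with a DSU), which gives $O((m+n)\alpha(n))$ time.

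\textbf{Step 2 (size of $X$).} The total number of increments is $2\cdot(\#\text{merges})\le 2(|U|-1)$. If every saturated vertex has absorbed at least $4$ increments, then $|X|\le 2(|U|-1)/4<|U|/2$. \textbf{Step 3 (spanning property).} At termination, take a component $C$ of $G-X$ and two terminals of $C$ lying in different clusters. Walk along a path between them in $C$ and look at the first edge leaving the first cluster. Either it reaches another cluster directly, which enables a type-(a) merge, or it enters a non-clustered region that eventually touches another cluster, which enables a type-(b) merge. In both cases no saturated vertex is involved, since the path avoids $X$. This contradicts termination, so all terminals of $C$ lie in one tree of $T$.

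\textbf{Main obstacle.} The difficulty is the interaction between path-internal vertices and the charging in Step 2. A vertex placed inside a type-(b) path already has degree $2$, so it saturates after only $2$ increments, and the bound of $4$ increments per saturated vertex fails. My first attempt at a fix is to redistribute the charge. When a type-(b) merge creates a path, charge its two ``free'' degree units to the merge itself, and argue by a token scheme that every merge distributes at most $2$ tokens in total while every saturated vertex collects at least $4$. Concretely, I would attach path regions only at their endpoints, so that internal vertices of a path are never chosen as attachment points before the cluster's original vertices. If that fails, the fallback is a threshold variant. Saturate a vertex only once it has received $2$ genuine merge increments beyond its initial degree, cap degrees at $4$, and bound $|X|$ by showing that each merge can create at most one saturation pair; this needs a more careful amortisation, possibly over the merge tree of clusters.
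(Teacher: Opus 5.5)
You have essentially the paper's algorithm: grow $T$ by paths whose endpoints have $T$-degree at most $3$ and whose interior avoids $V(T)$, and let $X$ be the degree-$4$ vertices. Your Step~3 is the paper's minimum-length-path argument; to make it rigorous, use the segment from the walk's last vertex in the first cluster to the next clustered vertex. The genuine gap is Step~2, and neither remedy you sketch closes it. Charging to merges gives only $4a+2b\le 2(|U|-1)$, where $a$ counts saturated terminals and $b$ counts saturated path-internal vertices, i.e.\ $|X|\le|U|-1$. Restricting attachment points does not help either. Take a non-terminal $w$ adjacent to four terminals that have no other neighbors. Then $w$ is interior to the first path and is the only possible attachment point for the remaining two terminals. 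Forbidding $w$ breaks Step~3, and merely deprioritising it still saturates $w$ after two increments. The missing idea is that no per-merge charging is needed. Non-terminals enter $T$ with degree $2$ and degrees never decrease, so every leaf of $T$ is a terminal. Let $c$ be the number of trees of $T$ and $n_j$ the number of vertices of degree $j$. The identity $\sum_{v\in V(T)}(\deg_T(v)-2)=-2c$ gives $n_1=n_3+2n_4+2(c-n_0)\ge 2n_4$, hence $|X|=n_4\le n_1/2\le|U|/2$. The early saturation of path-internal vertices is therefore harmless: excess degree is paid for by terminal leaves, not by merges.

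The running-time claim is the second gap, and it is where the paper does most of its work. A type-(b) merge requires searching non-clustered vertices. Once a path is carved out of a region, the remaining non-clustered vertices can split, which a disjoint-set union structure cannot track. Without a rule for permanently discarding explored vertices, later searches may also rescan the same vertices many times. Suppose instead that ``the region becomes part of the merged cluster'' means the whole region is absorbed while only the path enters $T$. Then correctness breaks: for an absorbed $r\notin V(T)$ adjacent to an unsaturated $d$ in another cluster, a type-(a) merge along $dr$ just hangs $r$ off the tree of $d$, so clusters no longer coincide with trees of $T$. Discarding $r$ instead loses a genuine linking path, namely the one running from a degree-$2$ vertex of the new path through the region to $r$ and then to $d$.

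The paper handles this as follows. It runs a DFS, in the subgraph of active vertices (not in $T$ and never visited), from $x$ for a boundary edge $(u,x)$. The DFS stops at the first vertex having a low-degree neighbor in a tree other than $T_u$. Every visited vertex is deactivated whether or not the search succeeds, so each vertex is scanned once. The key lemma, which has no counterpart in your proposal, is that this discards nothing useful. If a visited vertex stays outside $T$, then each of its neighbors outside $T$ was visited by the same search, and each of its neighbors in $T$ lies in the new component of $u$ or has degree $4$. Since components only merge and degrees only grow, such a vertex can never lie on a later linking path.
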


We call the resulting decomposition \emph{weak}: the components are defined in $G-X$, but their certifying trees may leave those components and contain vertices in $X$, and the same component of $T$ may therefore serve several components of $G-X$.\footnote{This terminology parallels other weak graph decompositions. In a weak low-diameter decomposition, the short paths certifying a cluster's diameter may leave the cluster. In a weak expander decomposition, the embedding certifying a component's expansion may leave the component; see, for example, \cite{li2024congestion,fleischmann2025faster}.} 

Compared with previous decompositions, our construction is much faster, taking almost truly linear time, while guaranteeing an optimal
degree bound $O(1)$. See \Cref{tab:decomp}. 
Importantly, our algorithm is very simple (see the description in \Cref{sec:basic:algo}). In contrast, previous approaches use either a cut-matching game \cite{LS22,long2024better} or F{\"u}rer and Raghavachari's algorithm \cite{DP20} as a subroutine.

\begin{table}
\begin{centering}
\begin{tabular}{|c|c|c|c|}
\hline 
 & Degree & Time & Weak?\tabularnewline
\hline 
\hline 
\cite{DP20} & $4$ & $O(mn\log n)$ & No\tabularnewline
\hline 
\cite{LS22} & $n^{o(1)}$ & $m^{1+o(1)}$ & No\tabularnewline
\hline 
\cite{long2024better} & $O(\log^{2}n)$ & $m^{1+o(1)}$ & No\tabularnewline
\hline 
This paper & $4$ & $O((m+n)\alpha(n))$ & Yes\tabularnewline
\hline 
\end{tabular}
\par\end{centering}
\caption{Known low-degree Steiner forest decompositions.\label{tab:decomp}}
\end{table}

\subsection{A New State-of-the-Art Connectivity Oracle under Vertex Failures}

Next, we describe the application of our decomposition.
A \emph{$d_{\star}$-vertex-failure connectivity oracle} is a data structure with three phases. First, it preprocesses a graph $G$. Second, given a failure set $D\subseteq V$ of size $d\le d_{\star}$, it updates its state. Third, given two vertices $s$ and $t$, it reports whether they are connected in $G-D$. 

Among an extensive line of work \cite{duan2010connectivity,van2019sensitive,DP20,LS22,pilipczuk2021algorithms,kosinas2023connectivity,long2024better,li2026}, Long and Saranurak \cite{LS22} gave an oracle with $O(m\log^{*}n)$ space, $m^{1+o(1)}+\tilde{O}(d_{\star}m)$ preprocessing time, $d^{2}n^{o(1)}$ update time, and $O(d)$ query time. Under well-known conjectures, these four parameters are simultaneously optimal up to subpolynomial factors. A natural question is whether the $n^{o(1)}$ factors in the preprocessing and update times can be removed. Long and Wang \cite{long2024better} improved the update time to $\tilde{O}(d^{2})$, but their preprocessing time remains $m^{1+o(1)}+\tilde{O}(d_{\star}m)$.

Our decomposition removes these subpolynomial factors and yields an oracle that subsumes all known results up to polylogarithmic factors. See \Cref{tab:oracle} for a comparison.
\begin{theorem}\label{thm:main-oracle}
Given an undirected graph $G$ and an upper bound $d_{\star}$ on the size $d$ of the failure set, there exists a deterministic $d_{\star}$-vertex-failure connectivity oracle with $O(m\log^{3}n)$ space, $O(d_{\star}m\log^{3}n)$ preprocessing time, $O(d^{2}\log^{3}n\log^{4}(d\log n))$ update time, and $O(d)$ query time.
\end{theorem}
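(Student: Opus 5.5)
The plan is to follow the Duan--Pettie / Long--Saranurak template for vertex-failure connectivity oracles, with their strong low-degree Steiner forest decomposition replaced by the weak decomposition of \Cref{thm:main-decomp}.

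\textbf{Building the hierarchy.} I would start with $U_0=V$ and apply $\CTP(G,U_0)$ to get $(X_1,T_1)$. Then I recurse with terminal set $U_1=X_1$, computing $\CTP(G,U_1)$ on the whole graph $G$ (not on $G-X_1$), and continue in the same way. Since $|X_{i+1}|\le |U_i|/2$, there are $L=O(\log n)$ levels. Each call costs $O((m+n)\alpha(n))$, so the hierarchy costs $O(m\log n\,\alpha(n))$ in total.

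At each level $i$ we obtain a forest $T_i$ of maximum degree $4$. For every component $C$ of $G-X_{i+1}$, some tree of $T_i$ spans $U_i\cap V(C)$. Because vertices of deeper levels may lie on $T_i$, I would handle the weakness by declaring a vertex to have level $i$ if it lies in $U_i\setminus U_{i+1}$. Every failed vertex of any level can cut $T_i$ into at most $4$ extra pieces, so $d$ failures cut each $T_i$ into at most $1+4d$ pieces per tree touched.

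\textbf{Connectivity argument.} The key claim is the following. Take two vertices $s,t\in U_i$ that are connected in $G-D$ and lie in the same component $C$ of $G-X_{i+1}$. Then their connectivity is witnessed by a path that uses tree pieces of $T_i$ together with edges between pieces, or by vertices of $U_{i+1}\setminus D$. The latter are handled at deeper levels. This recursion is where the weakness must be absorbed. Since $T_i$ may pass through $X_{i+1}$, a piece may connect vertices that are not connected in $G-X_{i+1}$; this is harmless, because every tree edge is a genuine edge of $G$ and only vertices of $D$ are deleted. So soundness holds, and completeness uses the spanning property within each component of $G-X_{i+1}$.

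\textbf{Data structures.} For each level I would store Euler tours of $T_i$. For the pieces I would store the 2D range/adjacency structures of Long--Saranurak, which answer ``is there an edge of $G-D$ between tree piece $P$ and piece $Q$'' using degree-bounded tree cuts. To answer the $d_\star$ bound, I would precompute for each vertex its neighbor lists sorted along the Euler tours. I would also build the $d_\star$-sparse ``extra edges'' structure that costs $\tilde O(d_\star m)$; this is where the $O(d_\star m\log^3 n)$ preprocessing comes from, with one $\log$ factor each for the levels, the range structures, and the Euler tours.

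\textbf{Updates and queries.} An update deletes $D$, which splits the relevant $T_i$ into $O(d)$ pieces per level. I would then union pieces via batched range queries, giving $O(d^2)$ piece pairs per level times polylog. This yields the $\tilde O(d^2)$ update time, with the $\log^4(d\log n)$ factor coming from the deterministic range-emptiness structures over the $O(d\log n)$ piece endpoints. A query maps $s$ and $t$ to their pieces in $O(d)$ time, by binary search or a scan over the failed neighbors, and compares their representatives.

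\textbf{Main obstacle.} I expect the hard step to be proving correctness when pieces of $T_i$ run through vertices of $X_{i+1}$ that are not failed. In that case a piece may join distinct components of $G-X_{i+1}$, and the deeper-level recursion must not double-count or lose connections. My first attempt would be an inductive invariant from the bottom level upward: after processing levels $>i$, the components of $G-D$ restricted to $U_{i+1}\setminus D$ are correctly known. I would then show that adding the pieces of $T_i$, plus edges between pieces found by range queries, correctly extends this to $U_i\setminus D$.
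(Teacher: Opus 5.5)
Your hierarchy construction (applying \Cref{thm:main-decomp} to all of $G$, with the previous $X$ as the new terminal set) is the same as the paper's. Everything after it, however, is a sketch rather than a proof. The step you call the main obstacle, correctness when pieces of $T_i$ run through non-failed vertices of deeper levels, is left as a tentative invariant. The four resource bounds are asserted rather than derived from a specified structure.

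The missing idea is that the weakness needs no new correctness argument. Put $X_{>i}:=\bigcup_{j>i}X_j$ and let $\calC_i$ be the components of $G-X_{>i}$. Then $U_i=X_i\setminus X_{>i}\subseteq X_i$ and $U_{>i}=X_{>i}\supseteq X_{i+1}$. Property~\ref{prop:decomp-forest} of \Cref{thm:main-decomp} then combines with two monotonicity facts: a Steiner forest for $U$ with respect to $G-X$ remains one for every $U'\subseteq U$, and with respect to $G-X'$ for every $X'\supseteq X$. Together these show that $T_i$ is a degree-$4$ Steiner forest for $U_i$ with respect to $G-U_{>i}$. That is exactly property (3) of the low-degree hierarchy of \cite{LS22}, which never requires the trees to avoid $U_{>i}$. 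So the oracle of \cite{LS22} applies verbatim as a black box:
its Lemma~6.14 gives $O(pm\log^2 n)$ space and $O(d_\star pm\log^2 n)$ preprocessing; its Theorem~7.2 gives update time $O(\Delta^2p^2d^2\log n+\Delta p^2d^2\log^4(\Delta pd)\log n)$; and its query procedure runs in $O(d)$ time. Substituting $p=O(\log n)$ and $\Delta=4$, and adding $O(m\alpha(n)\log n)$ for the hierarchy, gives the theorem.

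Several specific points of your sketch would also fail as written.

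First, your level assignment via $U_i\setminus U_{i+1}$ and your use of components of $G-X_{i+1}$ tacitly assume $X_{i+1}\subseteq X_i$. That is false here. Internal vertices of linking paths are non-terminals that can later become endpoints of further linking paths and reach degree $4$, so $X_{i+1}$ may contain vertices outside $X_i$. Your levels then do not partition $V$, and the component families are not laminar; this is why the paper works with $X_{>i}$.

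Second, the $d_\star$-adjacency extra edges you propose, if stored, take space proportional to $d_\star$ times the total list length, up to $\tilde\Theta(d_\star m)$. That is incompatible with the $d_\star$-independent $O(m\log^3 n)$ space bound. The paper's own self-contained oracle built this way (\Cref{thm:oracleDP}) achieves only $O(d_\star m\log^2 n)$ space and $O(d+\log\log n)$ query time. The stated space and exact $O(d)$ query come from the machinery of \cite{LS22}, which you would have to reconstruct or invoke.

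Third, your update accounting of $O(d^2)$ piece pairs per level is off. Affected pieces at different levels must be tested against each other, giving $O((pd)^2)$ pairs; this $p^2$ accounts for two of the three $\log n$ factors in the update time.

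Finally, if you want a self-contained proof, the weakness matters in only one place. A tree all of whose served components are unaffected may still contain a failed vertex. Declaring every tree that contains a copy of a failed vertex affected handles this, and there are still only $O(pd)$ such trees. Soundness is then just your observation that every tree edge is an edge of $G$.
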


The key insight behind \Cref{thm:main-oracle} is the observation that the requirement in the low-degree Steiner forest decomposition that the forest $T$ avoids $X$ is completely unnecessary for the connectivity-oracle application. Thus, our weak decomposition from  \Cref{thm:main-decomp} immediately speeds up the preprocessing time of the known framework of \cite{LS22}. 

In more details, we first show that a weak low-degree Steiner forest decomposition produces a \emph{low-degree hierarchy}, introduced in \cite{DP20} and formally defined in \cite{LS22}, in exactly the same way as a standard low-degree Steiner forest decomposition, with no loss in the hierarchy's parameters. Plugging our faster hierarchy construction into the framework of \cite{LS22} gives \Cref{thm:main-oracle}. For completeness, we also present a simpler but less efficient oracle in \Cref{sec:combine-DP}.

\begin{table}
\centering
\resizebox{\textwidth}{!}{%
\begin{tabular}{|c|c|c|c|c|c|}
\hline 
 & Det./Rand. & Space & Preprocessing & Update & Query\tabularnewline
\hline 
\hline 
\cite{duan2010connectivity} & Det. & $\tilde{O}(md_{\star}^{1-\frac{2}{c}}\red{n^{\frac{1}{c}-\frac{1}{c\log(2d_{\star})}}})$ & $\tilde{O}(md_{\star}^{1-\frac{2}{c}}\red{n^{\frac{1}{c}-\frac{1}{c\log(2d_{\star})}}})$ & $\tilde{O}(d^{\red{2c+4}})$ & $O(d)$\tabularnewline
\hline 
\cite{van2019sensitive} & $\red{\text{Rand.}}$ & $O(\red{n^{2}})$ & $O(\red{n^{\omega}})$ & $O(d^{\red{\omega}})$ & $O(d^{\red 2})$\tabularnewline
\hline 
\multirow{2}{*}{\cite{DP20}} & $\red{\text{Rand.}}$ & $O(m\log^{6}n)$ & $O(m\red n\log n)$ & $O(d^{2}\log^{3}n\log\log n)$ w.h.p. & $O(d)$\tabularnewline
\cline{2-6} 
 & Det. & $O(m\red{d_{\star}}\log n)$ & $O(m\red n\log n)$ & $O(d^{\red 3}\log^{3}n)$ & $O(d)$\tabularnewline
\hline 
\multirow{3}{*}{\cite{LS22}} & Det. & $O(m\log^{3}n)$ & $O(m\red n\log n)$ & $O(d^{2}\log^{3}n\log^{4}(d\log n))$ & $O(d)$\tabularnewline
\cline{2-6} 
 & Det. & $O(m\log^{*}n)$ & $m^{1+\red{o(1)}}+\tilde{O}(d_{\star}m)$ & $d^{2}\red{n^{o(1)}}$ & $O(d)$\tabularnewline
\cline{2-6} 
 &  & $\Omega(m)$ & $\Omega(d^{1-o(1)}m)$ & $\Omega(d^{2-o(1)})$ & $\Omega(d)$\tabularnewline
\hline 
\multirow{2}{*}{\cite{pilipczuk2021algorithms}} & Det. & $n\red{2^{2^{O(d_{\star})}}}$ & $m\red{n^{2}2^{2^{O(d_{\star})}}}$ & $\red{2^{2^{O(d_{\star})}}}$ & $\red{2^{2^{O(d_{\star})}}}$\tabularnewline
\cline{2-6} 
 & Det. & $n^{\red 2}\red{\text{poly}(d_{\star})}$ & $\red{\text{poly}(n)2^{O(d\log d)}}$ & $\red{\text{poly}(d_{\star})}$ & $\red{\text{poly}(d_{\star})}$\tabularnewline
\hline 
\cite{kosinas2023connectivity} & Det. & $O(m\red{d_{\star}}\log n)$ & $O(md_{\star}\log n)$ & $O(d^{\red 4}\log n)$ & $O(d)$\tabularnewline
\hline 
\cite{long2024better} & Det. & $O(m\log^{3}n)$ & $m^{1+\red{o(1)}}+O(d_{\star}m\log^{3}n)$ & $O(d^{2}(\log^{7}n+\log^{5}n\log^{4}d))$ & $O(d)$\tabularnewline
\hline 
\multirow{2}{*}{\cite{li2026}} & Det. & $O(n\red{d_{\star}^{2}}\alpha_{c}(n))$ & $\red{d_{\star}^{O(d_{\star}^{2})}}n+\tilde{O}(d_{\star}m+d_{\star}^{\red 6}n)$ & $O(d^{\red 4})$ & $O(d^{\red 3})$\tabularnewline
\cline{2-6} 
 & Det. & $\red{2^{O(d_{\star}^{2})}}n+O(n\red{d_{\star}^{\red 2}}\alpha_{c}(n))$ & $\red{d_{\star}^{O(d_{\star}^{2})}}n+\tilde{O}(d_{\star}m+d_{\star}^{\red 6}n)$ & $O(d^{\red 6})$ & $O(d)$\tabularnewline
\hline 
This paper & Det. & $O(m\log^{3}n)$ & $O(md_{\star}\log^{3}n)$ & $O(d^{2}\log^{3}n\log^{4}(d\log n))$ & $O(d)$\tabularnewline
\hline 
\end{tabular}
}
\caption{Known connectivity oracles that can handle an arbitrary number of vertex failures. The red text highlights suboptimal factors.\label{tab:oracle}}
\end{table}

\paragraph{Notation.}
Throughout the paper, $G=(V,E)$ is a simple undirected graph with $n:=|V|$ vertices and $m:=|E|$ edges. For a subgraph $H\subseteq G$, let $V(H)$ and $E(H)$ denote its vertex and edge sets, respectively, and let $\deg_H(u)$ denote the degree of $u\in V(H)$ in $H$. For a collection $\calC$ of subgraphs of $G$, let $V(\calC)$ denote the set of vertices belonging to at least one member of $\calC$. For $X\subseteq V(H)$, let $H[X]$ denote the subgraph of $H$ induced by $X$. We abbreviate $G[V\setminus X]$ as $G-X$. Finally, $\alpha(\cdot)$ denotes the inverse Ackermann function.

%% file: Steiner-Forest-Decom.tex
\section{Simple Weak Low-Degree Steiner Forest Decomposition}

The goal of this section is to prove \Cref{thm:main-decomp}. 

For convenience, for any subgraph $H$ of $G$
let us call a forest $T$ a \emph{Steiner forest for $U$ with respect to $H$} if,  for every connected component $C$ of $H$ with $U \cap V(C) \neq \emptyset$, some tree $T_C$ of $T$ spans all terminals in $U \cap V(C)$.
Thus, the forest $T$ from Property \ref{prop:decomp-forest} of \Cref{thm:main-decomp} is a Steiner forest for $U$ with respect to $G-X$.
Also, we say that $T$ is a \emph{degree-$4$ forest for $U$} if $U \subseteq V(T)$ and $T$ has maximum degree at most $4$.
A vertex $u \in V(T)$ is \emph{low-degree} in $T$ if $\deg_T(u) \leq 3$.

\subsection{Algorithm Description}
\label{sec:basic:algo}

The algorithm works as follows;
\begin{itemize}
\item Initialize $T$ as $|U|$ isolated vertices, one for each terminal in $U$, i.e.,  $V(T) \gets U$ and $E(T) \gets \emptyset$.
\item Repeatedly connect two components of $T$ by a path $P=(u,\dots,v)$ whose endpoints $u$ and $v$ are low-degree vertices in $T$ and whose internal vertices lie outside $V(T)$.
We refer to $P$ as a \emph{linking path}.
Update $V(T) \gets V(T) \cup V(P)$ and $E(T) \gets E(T) \cup E(P)$.
\item Once no linking path exists, return $(X,T)$ where $X$ contains precisely the vertices of degree $4$ in $T$. 
\end{itemize}

\subsection{Correctness}

\textbf{Property \ref{prop:decomp-size}.}
Every leaf of $T$ belongs to $U$ throughout the algorithm.
In any degree-$4$ forest, the number of degree-$4$ vertices is at most half the number of leaves.
Thus, the definition of $X$ gives $|X| \leq |U|/2$.

\medskip
\noindent
\textbf{Property \ref{prop:decomp-forest}.}
By construction, $T$ is a degree-$4$ forest for $U$.
It remains to show that $T$ is a Steiner forest for $U$ with respect to $G-X$.
Let $C$ be a connected component of $G-X$ with $U \cap V(C) \neq \emptyset$.
Since $U \subseteq V(T)$, at least one component of $T$ contains a terminal in $U \cap V(C)$.
The uniqueness of this component follows from the claim below.

\begin{claim}
    For every connected component $C$ of $G-X$, at most one connected component $T_1$ of $T$ satisfies $V(C) \cap V(T_1) \neq \emptyset$.
\end{claim}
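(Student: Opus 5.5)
Suppose for contradiction that two distinct components $T_1 \neq T_2$ of $T$ both meet $V(C)$. We will exhibit a linking path between them, which contradicts the termination condition of the algorithm. Pick $a \in V(C)\cap V(T_1)$ and $b \in V(C)\cap V(T_2)$. Since $C$ is connected in $G-X$, there is a path $Q$ from $a$ to $b$ inside $C$, and every vertex of $Q$ lies outside $X$. The key point is that a vertex of $V(T)\setminus X$ has degree at most $3$ in $T$, because $X$ is exactly the set of degree-$4$ vertices. So every vertex of $Q$ that lies in $V(T)$ is low-degree in $T$.

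Next we extract a linking path from $Q$. Walk along $Q$ and record the vertices of $Q$ that lie in $V(T)$, in order. This sequence starts at $a \in V(T_1)$ and ends at $b \in V(T_2)$. Hence there are two consecutive recorded vertices $u,v$ that lie in different components of $T$. Let $P$ be the subpath of $Q$ from $u$ to $v$. By the choice of $u$ and $v$ as consecutive recorded vertices, the internal vertices of $P$ lie outside $V(T)$. Both endpoints lie in $V(T)\setminus X$, so they are low-degree in $T$. Therefore $P$ is a linking path, and it connects two distinct components of $T$.

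This contradicts the fact that the algorithm stopped only when no linking path exists. One small subtlety needs checking: $X$ is defined from the final $T$, and the termination condition also refers to the final $T$. So the degrees used above are the final degrees, and the argument is consistent. The only other point worth stating explicitly is the degenerate case where $u$ and $v$ are adjacent in $Q$. Then $P$ is a single edge with no internal vertices, which is still a valid linking path under the algorithm's definition. With that noted, the argument goes through directly, and I expect no real obstacle.
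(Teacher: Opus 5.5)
Your proof is correct and follows essentially the same approach as the paper. The only difference is cosmetic: the paper takes a minimum-length path in $C$ joining two distinct components of $T$, while you take consecutive $T$-vertices along an arbitrary path. Either way one obtains a path whose internal vertices avoid $V(T)$ and whose endpoints are low-degree because they lie outside $X$, i.e., a linking path, contradicting termination.
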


\begin{proof}
Suppose for a contradiction that two distinct components of $T$ intersect $C$.
Among all paths in $C$ joining two distinct components of $T$, let $P^\star$ be one of minimum length, with endpoints $u \in V(T_1)$ and $v \in V(T_2)$.
The vertices $u$ and $v$ are low-degree in $T$ because they do not belong to $X$.
Moreover, every internal vertex of $P^\star$ lies outside $V(T)$; otherwise, a proper subpath of $P^\star$ would join two distinct components of $T$.
Thus, $P^\star$ is a linking path, contradicting the termination of the algorithm.
\end{proof}

\subsection{Implementation Details}\label{sec:implementation}

To implement our greedy algorithm from \Cref{sec:basic:algo}, we use the following data structures:
\begin{itemize}
\item We maintain a disjoint-set union data structure to keep track of the connected components of $T$.

\item We maintain a set $V_a \subseteq V \setminus V(T)$ of {\bf active} vertices. Initially, $V(T) = U$, $E(T) = \emptyset$, and $V_a = V \setminus V(T)$. Subsequently, vertices are only deleted from $V_a$, so this set shrinks monotonically.
We store one bit per vertex to indicate whether it is active.

\item We maintain the subgraph of $G$ induced by the active vertices. We call it the {\bf active subgraph} and denote it by $G_a := (V_a, E_a)$, where $E_a = \{ (u, v) \in E : u, v \in V_a\}$. We support adjacency-list queries on $G_a$.

\item An edge is a {\bf boundary edge} if either (i) its endpoints are low-degree vertices in distinct components of $T$, or (ii) one endpoint is a low-degree vertex of $T$ and the other is active.
We maintain candidate boundary edges in a doubly linked list $E_b \subseteq E$. We allow stale entries and discard them when they are processed.
\end{itemize}

We implement the algorithm from \Cref{sec:basic:algo} in {\bf rounds}. Each round processes one boundary edge and either finds a linking path or makes progress by deleting vertices from $V_a$. The algorithm terminates when $E_b = \emptyset$.

\medskip
\noindent {\bf Implementing a round.} Choose an arbitrary edge $(u, x) \in E_b$ and consider two cases.

\medskip
\noindent {\em Case 1: Both $u$ and $x$ are low-degree vertices in distinct components of $T$.} In this case, $P = (u, x)$ is a linking path.
We set $E(T) \leftarrow E(T) \cup \{(u, x)\}$ and update $E_b$ accordingly.

\medskip
\noindent {\em Case 2: $u$ is a low-degree vertex in $T$ and $x \in V_a$.} Perform a DFS from $x$ in $G_a$. When the DFS first visits a vertex $y$, check whether $y$ has a neighbor $v$ in the input graph $G$ such that $v$ is a low-degree vertex of $T$ in a component different from the component of $u$. If so, mark $y$ {\bf relevant}; otherwise, mark it {\bf irrelevant}. Terminate the DFS upon encountering the first relevant vertex.

The DFS terminates in one of two ways:
\begin{itemize}
    \item If it explores the entire component of $x$ in $G_a$ without encountering a relevant vertex, delete every visited vertex from $V_a$ and update $G_a$ and $E_b$ accordingly.

    \item If it reaches a relevant vertex $y$ with a qualifying neighbor $v$, let $P$ consist of the edge $(u,x)$, the path in the DFS tree from $x$ to $y$, and the edge $(y,v)$.
    This is a linking path.
    Add all vertices and edges of $P$ to $T$, delete every vertex visited by the DFS from $V_a$, and update $G_a$ and $E_b$ accordingly. Then terminate the current round.
\end{itemize}
The implementation is summarized in the pseudocode in \Cref{appendix:pseudocode}.

\subsubsection{Running Time of the Implementation}

\textbf{Time Spent on Maintaining Data Structures.}
Since vertices are only deleted from $V_a$, the total time spent maintaining $G_a$ is $O(m)$.
The forest $T$ is incremental, so the disjoint-set union data structure takes $O(n\alpha(n))$ time in total.
Using appropriate pointers and disjoint-set queries, we can determine in $O(\alpha(n))$ time whether an edge $e \in E$ is currently a boundary edge, and can insert a candidate into or delete it from $E_b$ in $O(1)$ time whenever an endpoint changes status.
Vertices are only inserted into $V(T)$, their degrees in $T$ only increase, components of $T$ only merge, and vertices are only deleted from $V_a$. Hence, each edge enters $E_b$ only constantly many times, and the total time spent maintaining $E_b$ is $O(m)$.

\medskip
\noindent
\textbf{Time Spent on Scanning an Edge in $E_b$.}
Given an edge $(u,x) \in E_b$, we can determine in $O(\alpha(n))$ time whether it is stale, whether $P=(u,x)$ is a linking path, or whether a DFS should start from $x$.
Since only $O(m)$ candidates are processed, their total scanning time is $O(m\alpha(n))$.

\medskip
\noindent
\textbf{Time Spent on DFS.}
Each active vertex is scanned at most once overall, because it is removed from $G_a$ at the end of the DFS that visits it.
When visiting a vertex $z$, we can determine in $O(\deg_G(z)\alpha(n))$ time whether it is relevant.
Thus, the total time spent on all DFS calls is $O(m\alpha(n))$.
Combining these bounds, the total running time is $O((m+n)\alpha(n))$, as claimed in \Cref{thm:main-decomp}.

\subsubsection{Correctness of the Implementation}

It remains to prove that no linking path $P=(u,\ldots,v)$ remains when \Cref{alg:indirect} terminates.

\medskip
\noindent
\textbf{Intuition.}
Suppose that a linking path $P=(u,\ldots,v)$ remains at termination.
In the nontrivial case $|P|>2$, consider the first time an internal vertex $x$ of $P$ is visited by a DFS.
At that time, the DFS explores at least one of the subpaths $(x,\ldots,v)$ and $(u,\ldots,x)$ and discovers a linking path containing $x$.
The vertex $x$ would then be added to the forest, contradicting the fact that it is an internal vertex of a linking path at termination.

\medskip
\noindent
We now formalize this intuition.

\medskip
\noindent
\textbf{Structure of the DFS Tree After Termination.}
Consider a DFS started from $x_1$ while processing the edge $(u,x_1) \in E_b$.
Let $T$ be the forest when this DFS starts, let $T_u$ be the component of $T$ containing $u$, and let $T^\new$ be the forest when the DFS terminates.
Either $T^\new=T$, or $T^\new$ is obtained by adding a linking path to $T$, thereby inserting vertices $x_1,\ldots,x_q$ and edges $(u,x_1),(x_1,x_2),\ldots,(x_q,v)$.
Let $T^\new_u$ be the component of $T^\new$ containing $T_u$.
Thus, either $T^\new_u=T_u$, or $T^\new_u$ is obtained by merging $T_u$ with another component of $T$ and the newly inserted path.
Let $W$ be the set of vertices visited by this DFS, and let $W' = W \setminus V(T^\new)$.
Thus, $W'=W$ if the DFS fails, whereas $W'=W\setminus\{x_1,\ldots,x_q\}$ if it finds the linking path above.
A visited vertex $y$ belongs to $W'$ precisely when DFS$(y,T_u)$ fails to find a linking path and returns to the parent of $y$ in the DFS tree.

\begin{claim}\label{claim:no-crossing}
    No edge of $G[V \setminus V(T^\new)]$ crosses $W'$.
\end{claim}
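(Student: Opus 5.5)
We need to show: no edge $(y,z)$ with $y\in W'$ and $z\in V\setminus (V(T^\new)\cup W')$. Such a $z$ is not in $V(T^\new)$, so $z$ is not in the forest at the end of the DFS. The plan is to split on whether $z$ was active when the DFS started.

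\textbf{Case A: $z\in V_a$ at the start of the DFS.} Then $y,z$ are both active, so $(y,z)\in E_a$. The vertex $y\in W'$ was visited and, by the characterization stated just before the claim, the call at $y$ returned to its parent after failing. Returning means the DFS fully scanned the adjacency list of $y$ in $G_a$, since the DFS only stops early upon finding a relevant vertex, and that vertex and its tree path go into $T^\new$, not $W'$. The order of events matters here, because the DFS terminates the instant a relevant vertex is found. I would argue carefully that a vertex that returned to its parent had already examined every active neighbour before the relevant vertex was found, so the early termination happens only along the current DFS stack, which consists of $x_1,\dots,x_q$. Since $y$ fully scanned its neighbours, $z$ was visited, so $z\in W$. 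As $z\notin V(T^\new)$, we get $z\in W'$, a contradiction.

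\textbf{Case B: $z\notin V_a$ at the start of the DFS.} Since $z\notin V(T^\new)\supseteq V(T)$, the vertex $z$ lies outside $T$ and was removed from $V_a$ earlier. Removals happen only at the end of a DFS that visited $z$, and $z$ was never added to the forest. So $z\in W'_{\text{old}}$ for some earlier DFS. I would apply this claim inductively over rounds, in chronological order, to that earlier DFS. At that earlier time $y$ was active and not in the forest, since it is active now. By induction, no edge of $G[V\setminus V(T^{\text{old,new}})]$ crosses $W'_{\text{old}}$. Both $y$ and $z$ are outside the forest at that time, because the forest only grows and $y$ is outside it even now. Hence $y\in W'_{\text{old}}$, so $y$ was deleted from $V_a$ then, contradicting that $y$ is visited later. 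So the proof is an induction on the sequence of DFS calls, with Case A as the base step and every case.

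\textbf{Main obstacle.} The delicate step is Case A: justifying that every $y\in W'$ had its whole $G_a$-neighbourhood scanned. In particular I need that the relevance check at $y$ is done on first visit, before any recursion, and that $y$ is irrelevant, which does not matter for $G_a$-neighbours. I would first try to formalize the DFS as a recursive procedure that returns ``fail'' only after iterating over all active neighbours. Then ``returns to the parent'' is exactly the invariant needed, and vertices visited but interrupted all lie on the stack and hence on the path $x_1,\dots,x_q$.
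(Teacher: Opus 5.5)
Your proposal is correct and follows the paper's proof. Both argue by induction over the sequence of DFS executions. An active $z$ is forced into $W$, and hence into $W'$, because a failed call at $y$ scans all its active neighbours. An inactive $z$ contradicts the inductive hypothesis for the earlier DFS that removed it while $y$ stayed active. Your check that only the stack vertices $x_1,\dots,x_q$ are interrupted makes explicit what the paper uses implicitly.
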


\begin{proof}
We prove the claim by induction over the sequence of DFS executions.
Consider an arbitrary edge $(y,z) \in E(G)$ crossing $W'$ such that $y \in W'$ and $z \notin W'$.
For the sake of contradiction, assume that $z \notin V(T^\new)$.
The vertex $z$ must be inactive. Otherwise, it would have been added to the DFS tree while $y$ was visited, since DFS$(y,T_u)$ cannot fail before visiting all active neighbors of $y$.
Thus, an earlier DFS must have removed $z$ from the active subgraph.
At the end of that earlier DFS, $y$ was still active and the edge $(z,y)$ existed, contradicting the induction hypothesis.
This contradiction completes the proof.
\end{proof}

\begin{claim}\label{claim:cross-structure}
    Every edge crossing $W'$ in $G$ either has an endpoint in $V(T^\new_u)$ or has an endpoint of degree $4$ in $T^\new$.
\end{claim}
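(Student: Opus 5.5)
Take an edge $(y,z)$ of $G$ with $y\in W'$ and $z\notin W'$. We show that if $z\notin V(T^\new_u)$ and $z$ does not have degree $4$ in $T^\new$, a contradiction follows. The split is on whether $z\in V(T^\new)$.

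\emph{Case $z\notin V(T^\new)$.} Then $(y,z)$ is an edge of $G[V\setminus V(T^\new)]$ crossing $W'$. This is impossible by \Cref{claim:no-crossing}.

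\emph{Case $z\in V(T^\new)$.} Assume $z\notin V(T^\new_u)$ and $\deg_{T^\new}(z)\le 3$, and aim for a contradiction. First, $z$ already lay in $V(T)$ when the DFS started. Otherwise $z$ would be one of the newly inserted vertices $x_1,\dots,x_q$, and these lie in $T^\new_u$. Since the inserted path touches only $T_u$ and one other component, $z$ lies in a component of $T$ that does not merge with $T_u$; in particular, it is different from $T_u$. Degrees in $T$ only grow, so $\deg_T(z)\le\deg_{T^\new}(z)\le 3$, and $z$ was low-degree throughout this DFS. Now consider the moment the DFS first visited $y$. Components only merge, and $z$ never joins $T_u$ during this DFS, so at that moment $z$ was a low-degree vertex of $T$ in a component different from that of $u$. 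By the relevance test, $y$ would have been marked relevant. The DFS would then stop at $y$ and add the DFS path to $x$ through $y$ into the forest, so $y\in\{x_1,\dots,x_q\}\subseteq V(T^\new)$. This contradicts $y\in W'$.

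The delicate step is the bookkeeping in the second case, in two respects. First, the relevance test compares $z$'s component against the component of $u$ in the forest at visit time, and this forest equals $T$ throughout the DFS because the path is added only at termination. Second, a vertex $z$ in the other merged component lies in $T^\new_u$, so it is legitimately excluded by the hypothesis. With these observations the argument reduces to the definition of relevance together with \Cref{claim:no-crossing}.
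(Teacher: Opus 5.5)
Your proof is correct and matches the paper's argument step for step. \Cref{claim:no-crossing} forces $z\in V(T^\new)$, and newly inserted vertices lie in $T^\new_u$; otherwise a low-degree $z\in V(T)$ outside $T_u$ would have made $y$ relevant, putting $y$ on the linking path and contradicting $y\in W'$. Your contrapositive phrasing (assuming $z\notin V(T^\new_u)$ and $\deg_{T^\new}(z)\le 3$) just spells out the bookkeeping the paper leaves implicit, namely $T_u\subseteq T^\new_u$ and $\deg_T(z)\le\deg_{T^\new}(z)$.
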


\begin{proof}
    Consider an edge $(y,z) \in E(G)$ crossing $W'$ such that $y \in W'$ and $z \notin W'$.
    By \Cref{claim:no-crossing}, $z \in V(T^\new)$.
    If $z$ entered the forest at the end of this DFS, then $z \in V(T^\new_u)$.
    Otherwise, $z$ was already in $V(T)$ when the DFS started.
    If $z$ is low-degree and belongs to a component different from $T_u$, then $y$ was relevant when visited and the DFS must have succeeded through $y$.
    This contradicts $y \in W'$, because $y$ would have entered the forest on the resulting linking path.
    Hence, either $z \in V(T^\new_u)$ or $z$ has degree $4$ in $T^\new$.
\end{proof}

\input{fig-DFS}

\medskip
\noindent
\textbf{No Linking Path Remains at the End of \Cref{alg:indirect}.}
Let $T^\final$ be the forest when \Cref{alg:indirect} terminates, and suppose for contradiction that a linking path remains.
A linking path with no internal vertex would itself be a boundary edge, contradicting termination. We may therefore write the path as $P=(u,x_1,\ldots,x_q,v)$ for some $q \geq 1$.
Let $T_u^\final$ and $T_v^\final$ be the distinct components of $T^\final$ containing $u$ and $v$, respectively.
Consider the DFS that made $x_1$ inactive (this must exist, as otherwise, the algorithm would have started another DFS from $x_1$ by scanning $(u,x_1) \in E_b$), and assume that this DFS was started while scanning $(u',x_1') \in E_b$.
For this DFS, consider the notations $T^\new$, $T_{u'}^\new$, and $W'$ as defined above.
At least one endpoint $w \in \{u,v\}$ lies outside $V(T_{u'}^\new)$, as otherwise, $u$ and $v$ would belong to the same component of $T^\final$.

Every internal node of $P$ lies outside $T^\final$, and therefore outside $T^\new$.
In particular, $x_1$ is an internal node of $P$, and so $x_1 \notin V(T^\new)$.
This concludes that $x_1 \in W'$ since $x_1$ becomes inactive in this DFS.
Consider the subpath of $P$ from $x_1$ toward $w$.
Starting from $x_1 \in W'$ and applying \Cref{claim:no-crossing} along the subpath shows that every node on the subpath before $w$ belongs to $W'$.
Finally, we have two cases;
\begin{itemize}
\item 
If $w \notin V(T^\new)$, one more application of \Cref{claim:no-crossing} gives $w \in W'$, which means that $w$ became inactive without entering the forest and can never enter it later, contradicting $w \in V(T^\final)$.
\item 
If $w \in V(T^\new)$, the last edge of the subpath crosses $W'$, and using \Cref{claim:cross-structure} either $w \in V(T^\new_{u'})$ or $\deg_{T^\new}(w) = 4$.
The first one contradicts the choice of $w$, and the second one contradicts the fact that $w$ is a low-degree node in $T^\final$ since the forest is incremental.
\end{itemize}
We have therefore proved the following corollary establishing the correctness of the implementation.

\begin{corollary}
    There is no linking path when \Cref{alg:indirect} terminates.
\end{corollary}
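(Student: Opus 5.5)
The plan is to argue by contradiction, assembling \Cref{claim:no-crossing} and \Cref{claim:cross-structure}, which are applied to the single DFS execution that deactivated the first internal vertex of a hypothetical surviving linking path. Let $T^\final$ be the forest at termination, and suppose a linking path $P=(u,x_1,\ldots,x_q,v)$ exists in $T^\final$. If $q=0$, then $(u,v)$ joins low-degree vertices of distinct components. The edge $(u,v)$ entered $E_b$ at the latest moment either endpoint changed status, and the algorithm only stops when $E_b=\emptyset$. Processing this edge would have added it, which is a contradiction. So $q\ge 1$.

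Next we need to check that $x_1$ was deactivated by some DFS. Since $x_1\notin V(T^\final)$ and $u$ is low-degree, if $x_1$ were still active then $(u,x_1)$ would be a boundary edge of type (ii) and would still lie in $E_b$, contradicting termination. So some DFS, started from an edge $(u',x_1')$, removed $x_1$ from $V_a$. Take $T^\new$, $T^\new_{u'}$ and $W'$ relative to that DFS. Because $x_1$ never entered the forest, $x_1\in W$ and $x_1\notin V(T^\new)$, so $x_1\in W'$. Since $T^\new_{u'}$ is contained in a single component of $T^\final$ and $u,v$ lie in distinct components, at least one endpoint $w\in\{u,v\}$ avoids $V(T^\new_{u'})$. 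We follow $P$ from $x_1$ toward $w$.

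The internal vertices of $P$ all lie outside $V(T^\final)\supseteq V(T^\new)$, hence in $G[V\setminus V(T^\new)]$. Repeated use of \Cref{claim:no-crossing} along consecutive edges therefore puts every internal vertex on this subpath into $W'$. For the final edge to $w$ there are two cases.
\begin{itemize}
\item If $w\notin V(T^\new)$, the same claim gives $w\in W'$. Then $w$ was deactivated without joining the forest. Vertices join $T$ only as terminals at initialization or via linking paths through active vertices, so $w$ never joins later. This contradicts $w\in V(T^\final)$.
\item Otherwise \Cref{claim:cross-structure} applies: either $w\in V(T^\new_{u'})$, which contradicts the choice of $w$, or $\deg_{T^\new}(w)=4$. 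Degrees only grow, so then $\deg_{T^\final}(w)=4$, contradicting that $w$ is low-degree.
\end{itemize}

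The main obstacle is making sure the claims apply at the right moment. In particular, the induction in \Cref{claim:no-crossing} must really cover all prior DFS executions, and the invariant that edges in $E_b$ are refreshed whenever an endpoint's status changes must hold, so that ``$E_b=\emptyset$'' truly certifies there are no boundary edges. I would verify these carefully and treat everything else as bookkeeping.
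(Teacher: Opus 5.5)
Your proposal is correct and follows the paper's proof essentially step for step. You rule out $q=0$ via the boundary-edge invariant, locate the DFS that deactivated $x_1$, choose an endpoint $w\notin V(T^\new_{u'})$, propagate membership in $W'$ along the subpath with \Cref{claim:no-crossing}, and close the two cases using \Cref{claim:cross-structure} together with the monotonicity of the forest (your explicit remarks that only active vertices can later join $T$, and that a still-active $x_1$ would keep $(u,x_1)$ in $E_b$, are exactly the facts the paper uses implicitly).
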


%% file: fig-DFS.tex
\begin{figure}[ht]
\caption{
An illustration of the DFS tree after the search terminates.}
\centering

\tikzset{every picture/.style={line width=0.75pt}} 

\begin{tikzpicture}[x=0.75pt,y=0.75pt,yscale=-0.7,xscale=0.7]

\draw [color={rgb, 255:red, 0; green, 0; blue, 0 }  ,draw opacity=1 ][line width=0.75]    (405,104.6) -- (425,147.6) ;
\draw [color={rgb, 255:red, 0; green, 0; blue, 0 }  ,draw opacity=1 ][line width=0.75]    (405,103.6) -- (448,67.6) ;
\draw  [fill={rgb, 255:red, 0; green, 0; blue, 0 }  ,fill opacity=1 ] (444,67.6) .. controls (444,65.39) and (445.79,63.6) .. (448,63.6) .. controls (450.21,63.6) and (452,65.39) .. (452,67.6) .. controls (452,69.81) and (450.21,71.6) .. (448,71.6) .. controls (445.79,71.6) and (444,69.81) .. (444,67.6) -- cycle ;
\draw  [fill={rgb, 255:red, 0; green, 0; blue, 0 }  ,fill opacity=1 ] (421,147.6) .. controls (421,145.39) and (422.79,143.6) .. (425,143.6) .. controls (427.21,143.6) and (429,145.39) .. (429,147.6) .. controls (429,149.81) and (427.21,151.6) .. (425,151.6) .. controls (422.79,151.6) and (421,149.81) .. (421,147.6) -- cycle ;
\draw  [fill={rgb, 255:red, 0; green, 0; blue, 0 }  ,fill opacity=1 ] (401,104.6) .. controls (401,102.39) and (402.79,100.6) .. (405,100.6) .. controls (407.21,100.6) and (409,102.39) .. (409,104.6) .. controls (409,106.81) and (407.21,108.6) .. (405,108.6) .. controls (402.79,108.6) and (401,106.81) .. (401,104.6) -- cycle ;
\draw [color={rgb, 255:red, 0; green, 0; blue, 0 }  ,draw opacity=1 ][line width=0.75]    (128,91.6) -- (171,55.6) ;
\draw  [fill={rgb, 255:red, 0; green, 0; blue, 0 }  ,fill opacity=1 ] (167,55.6) .. controls (167,53.39) and (168.79,51.6) .. (171,51.6) .. controls (173.21,51.6) and (175,53.39) .. (175,55.6) .. controls (175,57.81) and (173.21,59.6) .. (171,59.6) .. controls (168.79,59.6) and (167,57.81) .. (167,55.6) -- cycle ;
\draw  [fill={rgb, 255:red, 0; green, 0; blue, 0 }  ,fill opacity=1 ] (167,115.6) .. controls (167,113.39) and (168.79,111.6) .. (171,111.6) .. controls (173.21,111.6) and (175,113.39) .. (175,115.6) .. controls (175,117.81) and (173.21,119.6) .. (171,119.6) .. controls (168.79,119.6) and (167,117.81) .. (167,115.6) -- cycle ;
\draw  [fill={rgb, 255:red, 0; green, 0; blue, 0 }  ,fill opacity=1 ] (124,92.6) .. controls (124,90.39) and (125.79,88.6) .. (128,88.6) .. controls (130.21,88.6) and (132,90.39) .. (132,92.6) .. controls (132,94.81) and (130.21,96.6) .. (128,96.6) .. controls (125.79,96.6) and (124,94.81) .. (124,92.6) -- cycle ;
\draw [color={rgb, 255:red, 0; green, 0; blue, 0 }  ,draw opacity=1 ][line width=0.75]    (89.33,120) -- (128,92.6) ;
\draw  [fill={rgb, 255:red, 0; green, 0; blue, 0 }  ,fill opacity=1 ] (85.33,120) .. controls (85.33,117.79) and (87.12,116) .. (89.33,116) .. controls (91.54,116) and (93.33,117.79) .. (93.33,120) .. controls (93.33,122.21) and (91.54,124) .. (89.33,124) .. controls (87.12,124) and (85.33,122.21) .. (85.33,120) -- cycle ;
\draw  [fill={rgb, 255:red, 0; green, 0; blue, 0 }  ,fill opacity=1 ] (86.33,58) .. controls (86.33,55.79) and (88.12,54) .. (90.33,54) .. controls (92.54,54) and (94.33,55.79) .. (94.33,58) .. controls (94.33,60.21) and (92.54,62) .. (90.33,62) .. controls (88.12,62) and (86.33,60.21) .. (86.33,58) -- cycle ;
\draw [color={rgb, 255:red, 0; green, 0; blue, 0 }  ,draw opacity=1 ][line width=0.75]    (90.33,58) -- (128,92.6) ;
\draw [color={rgb, 255:red, 0; green, 0; blue, 0 }  ,draw opacity=1 ][line width=0.75]    (128,92.6) -- (171,115.6) ;
\draw [color={rgb, 255:red, 0; green, 0; blue, 0 }  ,draw opacity=1 ][line width=0.75]    (171,115.6) -- (130,146.6) ;
\draw  [fill={rgb, 255:red, 0; green, 0; blue, 0 }  ,fill opacity=1 ] (126,146.6) .. controls (126,144.39) and (127.79,142.6) .. (130,142.6) .. controls (132.21,142.6) and (134,144.39) .. (134,146.6) .. controls (134,148.81) and (132.21,150.6) .. (130,150.6) .. controls (127.79,150.6) and (126,148.81) .. (126,146.6) -- cycle ;
\draw   (374.33,105.5) .. controls (374.33,70.43) and (397.17,42) .. (425.33,42) .. controls (453.5,42) and (476.33,70.43) .. (476.33,105.5) .. controls (476.33,140.57) and (453.5,169) .. (425.33,169) .. controls (397.17,169) and (374.33,140.57) .. (374.33,105.5) -- cycle ;
\draw   (44.33,96.55) .. controls (44.33,58.19) and (85.67,27.1) .. (136.67,27.1) .. controls (187.66,27.1) and (229,58.19) .. (229,96.55) .. controls (229,134.91) and (187.66,166) .. (136.67,166) .. controls (85.67,166) and (44.33,134.91) .. (44.33,96.55) -- cycle ;
\draw [color={rgb, 255:red, 0; green, 0; blue, 0 }  ,draw opacity=1 ][line width=0.75]  [dash pattern={on 4.5pt off 4.5pt}]  (121.33,263.67) -- (272,165.6) ;
\draw [color={rgb, 255:red, 0; green, 0; blue, 0 }  ,draw opacity=1 ][line width=0.75]  [dash pattern={on 4.5pt off 4.5pt}]  (272,165.6) -- (307.33,200) ;
\draw [color={rgb, 255:red, 0; green, 0; blue, 0 }  ,draw opacity=1 ][line width=0.75]  [dash pattern={on 4.5pt off 4.5pt}]  (210.33,263.67) -- (272,165.6) ;
\draw [color={rgb, 255:red, 0; green, 0; blue, 0 }  ,draw opacity=1 ][line width=0.75]  [dash pattern={on 4.5pt off 4.5pt}]  (341.33,247) -- (367.33,293.67) ;
\draw [color={rgb, 255:red, 0; green, 0; blue, 0 }  ,draw opacity=1 ][line width=0.75]  [dash pattern={on 4.5pt off 4.5pt}]  (307.33,200) -- (341.33,247) ;
\draw [color={rgb, 255:red, 0; green, 0; blue, 0 }  ,draw opacity=1 ][line width=0.75]  [dash pattern={on 4.5pt off 4.5pt}]  (307.33,200) -- (276.33,272.67) ;
\draw   (121.33,263.67) -- (151,337.67) -- (91.67,337.67) -- cycle ;
\draw   (276.33,272.67) -- (296.5,315.67) -- (256.17,315.67) -- cycle ;
\draw   (210.33,263.67) -- (240,337.67) -- (180.67,337.67) -- cycle ;
\draw [color={rgb, 255:red, 0; green, 0; blue, 0 }  ,draw opacity=1 ][line width=0.75]  [dash pattern={on 4.5pt off 4.5pt}]  (312.33,329.67) -- (341.33,247) ;
\draw   (312.33,327) -- (338.42,392.33) -- (286.25,392.33) -- cycle ;
\draw  [dash pattern={on 4.5pt off 4.5pt}]  (171,115.6) .. controls (218.33,114.67) and (266.33,140.67) .. (272,165.6) ;
\draw  [dash pattern={on 4.5pt off 4.5pt}]  (405,103.6) .. controls (353.33,132.67) and (418.33,245.67) .. (367.33,293.67) ;
\draw  [fill={rgb, 255:red, 255; green, 255; blue, 255 }  ,fill opacity=1 ] (363.33,293.67) .. controls (363.33,291.46) and (365.12,289.67) .. (367.33,289.67) .. controls (369.54,289.67) and (371.33,291.46) .. (371.33,293.67) .. controls (371.33,295.88) and (369.54,297.67) .. (367.33,297.67) .. controls (365.12,297.67) and (363.33,295.88) .. (363.33,293.67) -- cycle ;
\draw  [fill={rgb, 255:red, 255; green, 255; blue, 255 }  ,fill opacity=1 ] (337.33,247) .. controls (337.33,244.79) and (339.12,243) .. (341.33,243) .. controls (343.54,243) and (345.33,244.79) .. (345.33,247) .. controls (345.33,249.21) and (343.54,251) .. (341.33,251) .. controls (339.12,251) and (337.33,249.21) .. (337.33,247) -- cycle ;
\draw  [fill={rgb, 255:red, 255; green, 255; blue, 255 }  ,fill opacity=1 ] (303.33,200) .. controls (303.33,197.79) and (305.12,196) .. (307.33,196) .. controls (309.54,196) and (311.33,197.79) .. (311.33,200) .. controls (311.33,202.21) and (309.54,204) .. (307.33,204) .. controls (305.12,204) and (303.33,202.21) .. (303.33,200) -- cycle ;
\draw  [fill={rgb, 255:red, 255; green, 255; blue, 255 }  ,fill opacity=1 ] (268,165.6) .. controls (268,163.39) and (269.79,161.6) .. (272,161.6) .. controls (274.21,161.6) and (276,163.39) .. (276,165.6) .. controls (276,167.81) and (274.21,169.6) .. (272,169.6) .. controls (269.79,169.6) and (268,167.81) .. (268,165.6) -- cycle ;
\draw  [color={rgb, 255:red, 255; green, 0; blue, 0 }  ,draw opacity=1 ] (468.33,49) .. controls (525.33,106) and (476.33,241) .. (433.33,297) .. controls (390.33,353) and (325.33,279) .. (316.33,254) .. controls (307.33,229) and (288.33,185) .. (234.33,172) .. controls (180.33,159) and (164.33,189) .. (138.33,191) .. controls (112.33,193) and (71.33,189) .. (49.33,167) .. controls (27.33,145) and (-3.67,52) .. (85.33,25) .. controls (174.33,-2) and (411.33,-8) .. (468.33,49) -- cycle ;
\draw   (115.33,251) .. controls (148.67,227) and (219.33,219) .. (270.33,247) .. controls (321.33,275) and (377.33,352) .. (351.33,395) .. controls (325.33,438) and (136.33,415) .. (94.33,365) .. controls (52.33,315) and (82,275) .. (115.33,251) -- cycle ;
\draw  [color={rgb, 255:red, 0; green, 0; blue, 255 }  ,draw opacity=1 ] (277.33,139) .. controls (362.33,140) and (453.33,360) .. (427.33,403) .. controls (401.33,446) and (87.33,448) .. (56.33,359) .. controls (25.33,270) and (192.33,138) .. (277.33,139) -- cycle ;
\draw  [dash pattern={on 4.5pt off 4.5pt}]  (492.33,281) .. controls (472.33,328) and (412.33,385) .. (313.33,371.67) ;
\draw  [fill={rgb, 255:red, 255; green, 255; blue, 255 }  ,fill opacity=1 ] (488.33,281) .. controls (488.33,278.79) and (490.12,277) .. (492.33,277) .. controls (494.54,277) and (496.33,278.79) .. (496.33,281) .. controls (496.33,283.21) and (494.54,285) .. (492.33,285) .. controls (490.12,285) and (488.33,283.21) .. (488.33,281) -- cycle ;
\draw  [fill={rgb, 255:red, 255; green, 255; blue, 255 }  ,fill opacity=1 ] (309.33,371.67) .. controls (309.33,369.46) and (311.12,367.67) .. (313.33,367.67) .. controls (315.54,367.67) and (317.33,369.46) .. (317.33,371.67) .. controls (317.33,373.88) and (315.54,375.67) .. (313.33,375.67) .. controls (311.12,375.67) and (309.33,373.88) .. (309.33,371.67) -- cycle ;
\draw    (501,272) -- (518.83,256.32) ;
\draw [shift={(520.33,255)}, rotate = 138.67] [color={rgb, 255:red, 0; green, 0; blue, 0 }  ][line width=0.75]    (10.93,-3.29) .. controls (6.95,-1.4) and (3.31,-0.3) .. (0,0) .. controls (3.31,0.3) and (6.95,1.4) .. (10.93,3.29)   ;
\draw    (505,290) -- (520.64,299.93) ;
\draw [shift={(522.33,301)}, rotate = 212.4] [color={rgb, 255:red, 0; green, 0; blue, 0 }  ][line width=0.75]    (10.93,-3.29) .. controls (6.95,-1.4) and (3.31,-0.3) .. (0,0) .. controls (3.31,0.3) and (6.95,1.4) .. (10.93,3.29)   ;

\draw (170,126) node [anchor=north west][inner sep=0.75pt]    {$u$};
\draw (395,78.4) node [anchor=north west][inner sep=0.75pt]    {$v$};
\draw (281,148.4) node [anchor=north west][inner sep=0.75pt]    {$x_{1}$};
\draw (318,179.4) node [anchor=north west][inner sep=0.75pt]    {$x_{2}$};
\draw (347,223) node [anchor=north west][inner sep=0.75pt]    {$x_{3}$};
\draw (381,287.4) node [anchor=north west][inner sep=0.75pt]    {$x_{4}$};
\draw (502,137.4) node [anchor=north west][inner sep=0.75pt]  [color={rgb, 255:red, 255; green, 0; blue, 0 }  ,opacity=1 ]  {$T^\new_u$};
\draw (225,48.4) node [anchor=north west][inner sep=0.75pt]    {$T_{u}$};
\draw (224,386.4) node [anchor=north west][inner sep=0.75pt]  [color={rgb, 255:red, 0; green, 0; blue, 0 }  ,opacity=1 ]  {$W'$};
\draw (35,413.4) node [anchor=north west][inner sep=0.75pt]  [color={rgb, 255:red, 0; green, 0; blue, 255 }  ,opacity=1 ]  {DFS tree};
\draw (294,372.4) node [anchor=north west][inner sep=0.75pt]    {$y$};
\draw (488.33,290.4) node [anchor=north west][inner sep=0.75pt]    {$z$};
\draw (530,295.4) node [anchor=north west][inner sep=0.75pt]    {or $z \in V(T^\new)$ and $\deg_{T^\new}(z) =4$};
\draw (531,244.4) node [anchor=north west][inner sep=0.75pt]    {either $z \in V(T^\new_u)$};

\end{tikzpicture}

\end{figure}

%% file: hierarchy.tex
\section{Optimal Low-Degree Hierarchy}\label{sec:hierarchy}

In this section, we recall the definition of a \emph{low-degree hierarchy} from Definition 5.1  of \cite{LS22} and construct such a hierarchy with optimal parameters in near-linear time in \Cref{thm:main-hierarchy}. 
In \Cref{sec:oracle}, we plug this hierarchy into the known framework to get an improved connectivity oracle.

\begin{definition}[Low-Degree Hierarchy]\label{def:LowDegreeHierarchy}
Let $G$ be a connected undirected graph.
A $(p,s)$-low-degree hierarchy of $G$, with depth $p$ and degree parameter $s$, is a sequence $\{(\calC_i, \calT_i)\}_{i=1}^p$ satisfying the following properties:

\begin{itemize}
    \item[(1)] 
    At each level $i \in [1,p]$, $\calC_i$ is a collection of pairwise vertex-disjoint, vertex-induced subgraphs of $G$, and no edge of $G$ joins two distinct members of $\calC_i$.
    In particular, $\calC_p = \{G\}$.
    
    \item[(2)] The family $\bigcup_{i=1}^p \calC_i$ is laminar: for every $i\in[1,p-1]$ and every $\gamma \in \calC_i$, there is a unique $\gamma' \in \calC_{i+1}$ such that $V(\gamma) \subseteq V(\gamma')$.
    We call $\gamma'$ the parent of $\gamma$, and $\gamma$ a child of $\gamma'$.
\end{itemize}
For each level $i \in [1,p]$ and component $\gamma \in \calC_i$, the terminal set of $\gamma$, denoted by $U(\gamma)$, consists of the vertices in $V(\gamma)$ that belong to none of its children.
The set $U(\gamma)$ may be empty; in particular, $U(\gamma)=V(\gamma)$ for every $\gamma \in \calC_1$.
The terminal set at level $i$ is $U_i := \bigcup_{\gamma \in \calC_i} U(\gamma)$.
\begin{itemize}
    \item[(3)] At each level $i \in [1,p]$, $\calT_i$ is a Steiner forest for $U_i$ with respect to $G-U_{>i}$ and has maximum degree at most $s$, where $U_{>i} := \bigcup_{j=i+1}^p U_j$ and $U_{>p}:=\emptyset$.
\end{itemize}
\end{definition}

Now, we prove the following theorem.

\begin{restatable}{theorem}{thmhierarchy}
\label{thm:main-hierarchy}
        A deterministic algorithm constructs a low-degree hierarchy of depth $O(\log n)$ and degree parameter $4$ in $O(m\alpha(n)\log n)$ time.
\end{restatable}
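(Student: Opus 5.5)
The plan is to build the hierarchy bottom-up by calling $\CTP$ repeatedly, halving the terminal set each time. Set $X_0:=V$. For $i\ge 1$, compute $(X_i,\calT_i)$ from a call to $\CTP$ with terminal set $X_{i-1}$, and stop at the first $p$ with $X_p=\emptyset$. The hierarchy is then as follows.
\begin{itemize}
\item $\calC_i$ is the set of connected components of $G-X_i$; in particular $\calC_p=\{G\}$ because $G$ is connected.
\item $U_i:=X_{i-1}\setminus X_i$, which gives $U_{>i}=X_i$.
\item $\calT_i$ is the forest returned by the $i$-th call.
\end{itemize}
Since $|X_i|\le |X_{i-1}|/2$ by Property~\ref{prop:decomp-size} of \Cref{thm:main-decomp}, we get $p=O(\log n)$. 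With $O(\log n)$ calls, each costing $O((m+n)\alpha(n))$, the running time is $O(m\alpha(n)\log n)$; here $n\le m+1$ because $G$ is connected.

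Next I verify \Cref{def:LowDegreeHierarchy}. Property (1) holds because components of $G-X_i$ are induced and no edge joins two of them. Property (2) needs every component of $G-X_{i-1}$ to lie inside a component of $G-X_i$, and it suffices that $X_i\subseteq X_{i-1}$. I must also check that the terminal set $U(\gamma)$ of the definition coincides with $V(\gamma)\cap U_i$. This also follows from nestedness: a vertex of $\gamma\in\calC_i$ lies in some child exactly when it lies outside $X_{i-1}$. Property (3) is exactly Property~\ref{prop:decomp-forest}: $\calT_i$ has degree at most $4$ and, for every component $C$ of $G-X_i$, one tree spans $X_{i-1}\cap V(C)=U_i\cap V(C)$. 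The trees may use vertices of $X_i$, or vertices outside $X_{i-1}$. This causes no problem, because (3) only asks that $\calT_i\subseteq G$ be a Steiner forest \emph{with respect to} $G-U_{>i}$. This is precisely where the weak form of the decomposition suffices.

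The main obstacle is guaranteeing $X_i\subseteq X_{i-1}$. If we call $\CTP(G,X_{i-1})$ as is, the degree-$4$ vertices may be non-terminals, namely internal vertices of linking paths that later served as endpoints of other linking paths. I would fix this by a small modification of the algorithm in \Cref{sec:basic:algo}, whose correctness I would then re-verify. The modification is to allow only terminals of $U=X_{i-1}$ as endpoints of linking paths. Internal vertices of linking paths that are not terminals then keep degree exactly $2$ forever, so the set of degree-$4$ vertices lies inside $U$. The leaf-counting bound also survives, since every leaf is still a terminal.

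What must be re-proved is the uniqueness claim: for a component $C$ of $G-X$, at most one tree of $T$ meets $U\cap V(C)$. The shortest-path argument has to change, because a path in $C$ between two trees may now end at, or pass through, a non-terminal tree vertex of degree $2$. I would try the following. Walk inside the tree from such a vertex to a terminal, using that the vertex is not in $X$. Alternatively, let linking paths traverse non-terminal tree vertices and splice them. I would also need to check that the DFS implementation of \Cref{sec:implementation} and Claims~\ref{claim:no-crossing}--\ref{claim:cross-structure} adapt, with "low-degree vertex" replaced by "low-degree terminal".

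If this modification turns out to be delicate, my fallback is to run the $i$-th call on $G$ with terminals $X_{i-1}$ unmodified, then take $X_i$ to be the output intersected with $X_{i-1}$. I would then argue that degree-$4$ non-terminals can be absorbed, because they lie inside components of $G-X_{i-1}$, which are already connected.
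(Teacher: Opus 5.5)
There is a genuine gap. Your argument rests on the nestedness $X_i\subseteq X_{i-1}$. Neither of your two routes supplies it, and in general it cannot hold.

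Let $G$ consist of a vertex $x$ adjacent to $t_1,\dots,t_9$, where each $t_j$ also carries five pendant leaves. The first call $\CTP(G,V)$ may output $X_1=\{t_1,\dots,t_9\}$: each $t_j$ first takes four of its leaves, after which $x$ and the fifth leaves are blocked. So the next terminal set is $U=\{t_1,\dots,t_9\}$, and these terminals are pairwise connected only through the non-terminal $x$.

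Take any $X\subseteq U$. Since $x\notin X$, the surviving terminals lie in one component of $G-X$. A tree spanning at least two of them must contain the edge $t_jx$ for each spanned $t_j$. Hence $|X|\le 4$ would force $\deg(x)\ge 5$, so $|X|\ge 5>|U|/2$. Thus no decomposition with $X\subseteq U$, $|X|\le |U|/2$ and maximum degree $4$ exists here. The uniqueness claim for your modified algorithm therefore cannot be re-proved by any local fix such as splicing.

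The naive rule already fails on a smaller instance: the path $(a,x,b)$ plus a terminal $c$ adjacent only to $x$. After linking $(a,x,b)$, every path from $c$ runs through the tree vertex $x$. The algorithm then stops with $X=\emptyset$ and $c$ alone in its own tree.

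Your fallback fails on the same nine-terminal instance. $\CTP(G,U)$ may return $Y=\{x\}$, with $x$ joined to four of the $t_j$ and the other five isolated. Then $Y\cap X_1=\emptyset$, and the single component $G$ of $G-\emptyset$ contains terminals from six different trees. That $x$ lies in a connected component of $G-X_1$ says nothing about the trees of $\calT_2$.

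The missing idea is that nestedness is unnecessary. Keep the unmodified recursion $(X_i,\calT_i)=\CTP(G,X_{i-1})$, but let $\calC_i$ be the components of $G-Z_i$, where $Z_i:=X_i\cup X_{i+1}\cup\dots\cup X_p$.

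\begin{itemize}
\item Since $Z_i\supseteq Z_{i+1}$, laminarity is automatic, and $\calC_p=\{G\}$.
\item For $\gamma\in\calC_i$ you get $U(\gamma)=V(\gamma)\cap Z_{i-1}=V(\gamma)\cap X_{i-1}$. Hence $U_i=X_{i-1}\setminus Z_i$ and $U_{>i}=Z_i$, because a vertex $v\in Z_i$ lies in $U_{k+1}$ for the largest $k$ with $v\in X_k$.
\item Property (3) then follows by monotonicity rather than verbatim. $\calT_i$ is a Steiner forest for $X_{i-1}\supseteq U_i$ with respect to $G-X_i$. Since $X_i\subseteq Z_i$, every component of $G-Z_i$ lies inside a component of $G-X_i$. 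So $\calT_i$ remains a Steiner forest for the smaller terminal set $U_i$ with respect to $G-Z_i=G-U_{>i}$.
\end{itemize}

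This is the paper's construction up to an index shift. With it, your depth and running-time bounds carry over unchanged.
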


This construction strictly subsumes all prior low-degree hierarchy constructions from \cite{DP20,LS22,li2024congestion}
and also has optimal parameters; any hierarchy with degree $O(1)$ needs $\Omega(\log n)$ depth. See \Cref{thm:lower} for the proof of this trade-off.

Below, we describe the construction.

\paragraph{Construction.}
    
Set $X_1:=V$.
For each $i\geq1$ with $X_i\neq\emptyset$, apply the algorithm of \Cref{thm:main-decomp} to obtain $(X_{i+1},T_i)=\CTP(G,X_i)$.
By Property~\ref{prop:decomp-size}, after at most $1+\lceil\log_2 n\rceil$ calls there is an index $i^\star$ such that $X_{i^\star+1}=\emptyset$.
Set $p:=i^\star$ and define the $(p,4)$-low-degree hierarchy as follows:
\begin{itemize}
    \item $\calT_i:=T_i$ for each $i\in[1,p]$.
    \item $\calC_i$ is the collection of connected components of $G-X_{>i}$, where $X_{>i}:=\bigcup_{j=i+1}^p X_j$ and $X_{>p}:=\emptyset$.
\end{itemize}

\paragraph{Analysis.}
Properties (1) and (2) of \Cref{def:LowDegreeHierarchy} follow directly from the definition of $\calC_i$.
By Property~\ref{prop:decomp-forest}, every forest $\calT_i$ has maximum degree at most $4$.
The terminals at level $i$ are precisely the vertices in components of $\calC_i$ that belong to no lower-level component.
Thus, $U_i=V(\calC_i)\cap X_i$, so $U_i\subseteq X_i$ and $U_{>i}=X_{>i}$ for every $i\in[1,p]$.
Property~\ref{prop:decomp-forest} also shows that $\calT_i=T_i$ is a Steiner forest for $X_i$ with respect to $G-X_{i+1}$.
Since $U_i\subseteq X_i$ and $X_{i+1}\subseteq X_{>i}=U_{>i}$, the observation below implies that $\calT_i$ is a Steiner forest for $U_i$ with respect to $G-U_{>i}$.

\begin{observation}
    Let $X,U \subseteq V$, and let $T \subseteq G$ be a Steiner forest for $U$ with respect to $G-X$.
    \begin{enumerate}
        \item For every $U' \subseteq U$, $T$ is also a Steiner forest for $U'$ with respect to $G-X$.
        
        \item For every $X' \supseteq X$, $T$ is also a Steiner forest for $U$ with respect to $G-X'$.
    \end{enumerate}
\end{observation}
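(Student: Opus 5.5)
Both parts follow by unwinding the definition of a Steiner forest. Recall that $T$ is a Steiner forest for $U$ with respect to $H$ if, for every connected component $C$ of $H$ with $U\cap V(C)\neq\emptyset$, some tree $T_C$ of $T$ spans $U\cap V(C)$. The plan is to verify this condition directly for the new parameters in each case.

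\textbf{Part 1.} Let $U'\subseteq U$ and let $C$ be a component of $G-X$ with $U'\cap V(C)\neq\emptyset$. Then $U\cap V(C)\supseteq U'\cap V(C)$ is nonempty, so the hypothesis yields a tree $T_C$ of $T$ spanning $U\cap V(C)$. Since $U'\cap V(C)\subseteq U\cap V(C)$, the same tree $T_C$ also spans $U'\cap V(C)$.

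\textbf{Part 2.} Let $X'\supseteq X$. Since $G-X'$ is a subgraph of $G-X$, every connected component $C'$ of $G-X'$ is connected in $G-X$ and therefore lies inside a unique component $C$ of $G-X$.

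Suppose $U\cap V(C')\neq\emptyset$. Then $U\cap V(C)\neq\emptyset$, so there is a tree $T_C$ of $T$ spanning $U\cap V(C)$. Because $U\cap V(C')\subseteq U\cap V(C)$, the tree $T_C$ spans $U\cap V(C')$.

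Note that $T$ may contain vertices of $X'$. This causes no problem, since the definition only requires $T\subseteq G$, and that requirement does not depend on $X$.

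I do not expect any real obstacle here. The only point that needs care is the containment of components in Part 2, namely that each component of $G-X'$ sits inside a single component of $G-X$; this holds because removing more vertices can only split components, never merge them.
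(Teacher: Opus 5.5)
Your proposal is correct and follows the same route as the paper: Part 1 reuses the tree $T_C$ because $U'\cap V(C)\subseteq U\cap V(C)$, and Part 2 uses the fact that every component of $G-X'$ lies inside a single component of $G-X$. Your write-up simply spells out these two steps in more detail than the paper does.
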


\begin{proof}
    For the first part, a component of $T$ containing $U\cap V(C)$ also contains $U'\cap V(C)\subseteq U\cap V(C)$.
    The second part follows because every connected component of $G-X'$ is contained in a connected component of $G-X$.
\end{proof}

%% file: connctivity-oracle.tex
\section{Connectivity Oracles under Vertex Failures}
\label{sec:oracle}

We first combine our hierarchy with the state-of-the-art oracle of \cite{LS22} to prove \Cref{thm:main-oracle}.
We then give a self-contained description of a simpler oracle based on \cite{DP20}, at the cost of weaker bounds.

\subsection{Connectivity Oracle of \cite{LS22} Combined with Our Low-Degree Hierarchy}\label{sec:combine-LS}

We use the framework of \cite{LS22} as a black box.

\begin{proof}[Proof of \Cref{thm:main-oracle}]
    First, construct a $(p,4)$-low-degree hierarchy using \Cref{thm:main-hierarchy}, where $p=O(\log n)$.
    This takes $O(m\alpha(n)\log n)$ time and uses $O(m\log n)$ space.
    By Lemma 6.14 in the arXiv version of \cite{LS22}, the remaining data structures can be constructed in $O(d_\star p m\log^2 n)=O(d_\star m\log^3 n)$ time using $O(pm\log^2 n)=O(m\log^3 n)$ space.
    Thus, the total preprocessing time and space are $O(d_\star m\log^3 n)$ and $O(m\log^3 n)$, respectively.

    Theorem 7.2 in the arXiv version of \cite{LS22} bounds the update time for a set $D$ of $d\leq d_\star$ failed vertices by
    \[
    O\bigl(\Delta^2p^2d^2\log n+\Delta p^2d^2\log^4(\Delta p d)\log n\bigr)
    =O\bigl(d^2\log^3 n\log^4(d\log n)\bigr),
    \]
    where $(p,\Delta)=(O(\log n),4)$.

    Finally, the procedure in Section 7.3 of the arXiv version of \cite{LS22}, which is essentially the query procedure of \cite{DP20}, has query time $O(d)$.
\end{proof}

\subsection{Connectivity Oracle of \cite{DP20} Combined with Our Low-Degree Hierarchy}\label{sec:combine-DP}

For completeness, we present a simple connectivity oracle based on \cite{DP20}.
We use elementary data structures, incurring only a polylogarithmic overhead.

\begin{restatable}{theorem}{thmDP}
\label{thm:oracleDP}
    Given a connected undirected graph $G$ with $n$ vertices and $m$ edges and an upper bound $d_\star$ on the failure-set size $d$, there is a simple deterministic $d_\star$-vertex-failure connectivity oracle with the following guarantees:
    \begin{enumerate}
        \item $O(d_\star m \log^2 n)$ space.
        \item $O(d_\star m \log^2 n)$ preprocessing time.
        \item $O(d^2 (\log^2 n) (d_\star^2 + \log^2 n))$ update time.
        \item $O(d + \log\log n)$ query time.
    \end{enumerate}
\end{restatable}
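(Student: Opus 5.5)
We will build the Duan--Pettie style oracle on top of the $(p,4)$-low-degree hierarchy of \Cref{thm:main-hierarchy}, with $p=O(\log n)$.

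\textbf{Preprocessing and space.} For each level $i$ and each tree $\tau$ of $\calT_i$, we fix an Euler tour, so that deleting $k$ vertices of degree at most $4$ splits $\tau$ into $O(k)$ subtrees. Each subtree is a union of $O(k)$ contiguous intervals of the tour. For every component $\gamma\in\calC_i$, every terminal set $U(\gamma)$ lies in a single tree $T_\gamma$ of $\calT_i$, by Property (3) of \Cref{def:LowDegreeHierarchy}. This is the one place where the weak decomposition enters: $T_\gamma$ may contain vertices outside $\gamma$, specifically vertices of $U_{>i}$. Such vertices only act as extra possible failures that cut $T_\gamma$ further. Since each failed vertex lies in at most one tree per level, the total fragmentation is still $O(d)$ per level.

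Connectivity between tour intervals is stored as in \cite{DP20}. For each ancestor level and each edge $(a,b)$ with $a\in U(\gamma)$ and $b$ in a descendant or equal component, we record $b$'s tour position in the tree $T_\gamma$. We also store, for each vertex $v\in U_j$ with $j\ge i$, a structure answering whether some edge joins a given interval of $T_\gamma$ to another interval. To handle up to $d_\star$ failed high-degree vertices, we follow \cite{DP20}'s deterministic scheme: each vertex $a$ keeps up to $d_\star$ ``representative'' neighbors per relevant interval, which gives the $O(d_\star m\log^2 n)$ space and time. This uses $O(m)$ entries per level times $O(\log n)$ ancestors times $O(\log n)$ for 2D range structures (sorted lists with fractional cascading).

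\textbf{Update.} Given a failure set $D$, for each level $i$ we mark the components $\gamma$ whose tree $T_\gamma$ contains a failed vertex. These affected components form $O(d\log n)$ ancestors. For each affected $\gamma$, we split $T_\gamma-D$ into $O(d)$ fragments and build an auxiliary graph on all fragments across levels. Two fragments are joined if an edge of $G-D$ connects them. Such edges are found by range queries on interval pairs, giving $O(d^2)$ fragment pairs per level. Each pair is tested in $O(\log n)$ time after scanning the $O(d_\star)$ stored candidates that avoid $D$, which yields the stated $O(d^2\log^2 n(d_\star^2+\log^2n))$ bound.

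The key correctness claim is the following. Two vertices of $U(\gamma)$ are connected in $G-D$ if and only if their fragments are connected in the auxiliary graph, together with fragments from higher levels. The proof is by induction down the hierarchy. A path in $G-D$ leaving $\gamma$ must pass through $U_{>i}$, and those vertices lie in higher-level trees, whose fragments are already represented.

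\textbf{Query.} For $s,t$, we locate each vertex's fragment. Starting from the component at $s$'s own level, we find the first failed tree on the chain of ancestors, using binary search over the $O(\log n)$ levels in $O(\log\log n)$ time. We then identify its fragment via predecessor search among the $O(d)$ failed positions, which takes $O(d)$ time, and finally compare the auxiliary components of the two fragments.

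\textbf{Main obstacle.} The hardest step is the correctness of the auxiliary graph in the weak setting. A tree $T_\gamma$ may pass through failed vertices outside $\gamma$, so fragments of $T_\gamma$ that look disconnected may in fact be connected through components other than $\gamma$. Conversely, $T_\gamma$ may connect $U(\gamma)$ only via vertices of $U_{>i}$. I would first verify that Property (3), with respect to $G-U_{>i}$, suffices. Any path between terminals of $\gamma$ in $G-D$ either stays in $G-U_{>i}$, where it lies inside $\gamma$ and its lower-level children, or touches $U_{>i}$, where it is captured by higher-level fragments through the recorded edges. Getting the $d_\star^2$ factor from the representative scheme is the secondary difficulty.
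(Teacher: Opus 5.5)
\textbf{Gap 1: connectivity through unaffected components is not represented.} Your outline has the right skeleton (hierarchy, Euler tours, fragments, an auxiliary graph, binary search over ancestors), but it omits the mechanism that makes the Duan--Pettie approach work. A path in $G-D$ between two fragments typically passes through some component $\gamma$ that contains no failed vertex, and nothing in $\gamma$'s tree records this.

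The paper handles it as follows. For each $\gamma_i$ it stores the list $A(\gamma_i)$: the concatenation, over ancestors $\gamma_j$, of the terminals of $U(\gamma_j)$ adjacent to $V(\gamma_i)$, in the Euler-tour order of $\tau(\gamma_j)$. So what is recorded is the ancestor terminal's tour position, not that of a descendant vertex $b$, which need not lie in $T_\gamma$. It then adds the artificial edges $\Lambda_{d_\star}(A(\gamma_i))$ between entries at distance at most $d_\star+1$. Since at most $d\le d_\star$ entries die, the survivors for an unaffected $\gamma$ remain connected by artificial edges. Each such edge can be replaced by a path through $G[V(\gamma)]$.

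This is where the bounds come from. The $d_\star$ in space and preprocessing is $|E(H)|=O(d_\star m\log n)$, times $\log n$ for range reporting. The $d_\star^2$ in the update comes from discarding, for each of the $O(pd)$ affected components, the $O(d_\star^2)$ artificial edges crossing each of its $O(pd)$ interval breaks; this costs $O(p^2d^2d_\star^2)$ on top of $O((pd)^2\log^2 n)$. Your ``$d_\star$ representatives per relevant interval'' are not well defined at preprocessing time, since the intervals depend on $D$. Nothing in the proposal shows they certify connectivity through unaffected components.

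The same mechanism drives the query. If $u$'s component is unaffected, you scan the first $d+1$ entries of $A(\hat\gamma(u))$ for a surviving $u'$. That $u'$ lies in an affected subtree because every proper ancestor of $\hat\gamma(u)$ is affected. You also need an early exit when $\hat\gamma(u)=\hat\gamma(v)$. Your query instead looks up $u$'s fragment in an ancestor tree, but $u$ need not belong to that tree.

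\textbf{Gap 2: your notion of ``affected'' fails for the weak decomposition.} You mark $\gamma$ when $T_\gamma$ contains a failed vertex. But one tree of $\calT_i$ may serve many components of $\calC_i$: it can pass through $U_{>i}$ into other components, not just pick up vertices of $U_{>i}$. So this rule can mark unboundedly many components, contradicting your $O(d\log n)$ bound. The predicate is also not monotone along ancestor chains, so your binary search is unjustified. And it misses components that contain a failed vertex not lying on their tree; such components cannot be treated as intact.

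The paper instead calls $\gamma$ affected iff $V(\gamma)\cap D\neq\emptyset$. This gives at most $p$ affected components per failed vertex and is upward-closed, so binary search correctly finds $\hat\gamma(u)$. It then separately marks as affected every tree containing a copy of a failed vertex, which keeps the number of affected trees at $O(pd)$.
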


\subsubsection{Some Necessary Structures}

\textbf{Simple Two-Dimensional Orthogonal Range Reporting.}
Given a set of $N$ points on the grid $[M]\times[M]$, for an integer $M$, there is a data structure of size $O(N\log N)$ that reports the $k$ points in any rectangle $[x,y]\times[w,z]$ in $O(\log^2 N+k)$ time.

\medskip
\noindent
\textbf{Simple Edge-Failure Euler-Tour Structure.}
Given a forest $F$ of a graph $G$, there is a data structure of size $O(m\log n)$ with the following guarantees.
Let $Z$ be a set of deleted edges, $d$ of which belong to $F$ and $d'$ of which do not.
The forest $F'=F-Z$ has at most $2d$ affected subtrees, meaning components created within trees of $F$ that contain deleted forest edges.
In $O(d^2\log^2 n+d')$ time, the data structure reports which pairs of affected subtrees are joined by an edge of $E(G)-Z$; it also identifies the affected subtree containing a query vertex $u$ in $O(\log d)$ time.
\begin{itemize}
    \item \textbf{Constructing the Structure.}
    Consider an arbitrary root for $F$ and order its nodes by DFS preorder (equivalently, by their first occurrence in an Euler tour). 
    Map the nodes to integers in this order.
    For each pair $(T_1,T_2)$ of not necessarily distinct trees of $F$, construct a two-dimensional orthogonal range-reporting structure by adding the points corresponding to $(x,y)$ and $(y,x)$ for every edge $\{x,y\}\in E(G)$ between the two trees.
    To save space, construct this structure only for pairs joined by an edge of $G$.
    Assign an identifier $\operatorname{id}(T)$ to each component $T$ of $F$, and store $\operatorname{id}(T_1)n+\operatorname{id}(T_2)$ in a set $S$ for every pair $(T_1,T_2)$ joined by an edge.

    \item \textbf{Reconnecting Affected Subtrees.} Removing $q$ edges from a tree partitions its vertex ordering into at most $2q+1$ intervals.
    For every pair of intervals corresponding to affected subtrees, first query $S$ in $O(\log n)$ time to determine whether the relevant range-reporting structure exists. If it does, query that structure for an edge joining the two subtrees; otherwise, no such edge exists in $G$.
    Here, we stop the range reporting procedure once it finds at least one edge outside $Z$.

    \item \textbf{Locating the Affected Subtree Containing $u$.} Binary search over the sorted left endpoints of the affected intervals identifies the affected subtree containing $u$ in $O(\log d)$ time.
\end{itemize}

\medskip
\noindent
\textbf{Final Euler-Tour Structure.}
If the forest $F$ has $O(1)$ maximum degree, the previous Euler-Tour structure can handle vertex and edge failures as well.
We denote this final Euler-Tour structure by $\textbf{ET}(G,F)$.
More precisely, given a failed vertex-set $D$ and a failed edge-set $Z$, the nonforest edges incident to $D$ are only deleted implicitly, and they are irrelevant to the range query.
In other words, we do not perform a range query on the singleton nodes in $D$, and it is not necessary to iterate over all edges incident to $D$ and remove them from the graph explicitly (there may be $\Theta(m)$ such edges).
So, $\textbf{ET}(G,F)$ has $O(d^2 \log^2 n + d')$ update time and $O(\log d)$ query time, where $d$ is the total number of vertices and edges deleted from $F$ and $d'$ is the total number of vertices and edges removed from outside $F$.

\medskip
\noindent
\textbf{$d_\star$-Adjacency Edges.}
Let $L=(v_1,v_2,\ldots,v_r)$ be a list of vertices.
The set $\Lambda_{d_\star}(L)$ contains an edge between every two vertices at distance at most $d_\star+1$ in $L$:
\[
\Lambda_{d_\star}(L)=\bigl\{\{v_i,v_j\}:1\leq i<j\leq r \text{ and } j-i\leq d_\star+1\bigr\}.
\]

\subsubsection{Constructing the Oracle}

First, use \Cref{thm:main-hierarchy} to compute a $(p,s)$-low-degree hierarchy $\{(\calC_i,\calT_i)\}_{i=1}^p$, where $p=O(\log n)$ and $s=4$.
For brevity, let $\calC:=\bigcup_{i=1}^p\calC_i$ and $\calT:=\bigcup_{i=1}^p\calT_i$.
For each $i\in[1,p]$ and $\gamma_i\in\calC_i$, let $\gamma_{i+1},\ldots,\gamma_p$ denote the ancestors of $\gamma_i$.
Whenever $U(\gamma_i)\neq\emptyset$, let $\tau(\gamma_i)\in\calT_i$ be the unique tree containing all terminals in $U(\gamma_i)$; no such choice is needed when $U(\gamma_i)=\emptyset$.
For each $j\in[i+1,p]$, let $A(\gamma_i,\gamma_j)$ be the list of terminals in $U(\gamma_j)$ adjacent to at least one vertex of $V(\gamma_i)$, ordered according to the same Euler tour of $\tau(\gamma_j)$.
If $U(\gamma_j)=\emptyset$, this list is empty and no choice of $\tau(\gamma_j)$ is needed.
Define $A(\gamma_i)$ as the concatenation of $A(\gamma_i,\gamma_{i+1}),\ldots,A(\gamma_i,\gamma_p)$.
We identify each entry of $A(\gamma_i)$ with the corresponding terminal copy in its tree of $\calT$.

\medskip
\noindent
\textbf{Auxiliary Graph.}
Define a multigraph $H$ whose vertices are the vertices of the forests in $\calT$.
An original vertex $u\in V(G)$ may appear in several trees of $\calT$; these appearances are represented by distinct copies in $H$, and we have $|V(H)|\leq pn$.
The edge set $E(H)$ consists of the following edges; 
for each $\{u,v\} \in E(G)$, we add an original edge in $H$ between the terminal copies of $u$ and $v$.
For each component $\gamma\in\calC$, we add the edges of $\Lambda_{d_\star}(A(\gamma))$ to $H$ as artificial edges labeled by $\gamma$.
For each tree $\tau \in \calT$, and each edge of $\tau$, we add an edge labeled by $\tau$ to $H$ between the corresponding copies of the endpoints.
Thus, $H$ may have parallel edges with different labels, but each original edge contributes to only $O(p)$ lists $A(\gamma)$.
Thus, $|E(H)| \leq m + p(d_\star+1)m + p(n-1) = O(d_\star m \log n)$.
Finally, construct the Euler-tour data structure $\textbf{ET}(H,\calT)$ that dominates the space and preprocessing time $O(|E(H)| \log |V(H)|) = O(d_\star m \log^2 n)$.

\subsubsection{Updating the Oracle After Vertex Failure}

Let $D\subseteq V$ be a set of $d\leq d_\star$ failed vertices.

\medskip
\noindent
\textbf{Affected Data Structures.}
We call a component $\gamma\in\calC$ affected if $V(\gamma)\cap D\neq\emptyset$.
When an affected component $\gamma$ also satisfies $U(\gamma)\neq\emptyset$, we call its corresponding tree $\tau(\gamma)$ affected as well.
Moreover, if any tree in $\calT$ contains a copy of a failed vertex, we also call it an affected tree.\footnote{We need to explicitly consider these trees affected as well since in our low-degree hierarchy, a tree $\tau(\gamma)$ can contain nodes outside $V(\gamma)$. But the bound $O(dp)$ on the number of affected trees will not change.}
Remove every copy of a vertex in $D$ and its incident edges from each affected tree, breaking it into affected subtrees, and mark the corresponding entries in $A(\gamma)$ as dead (more precisely, for implementation, the list remains fixed; an entry is treated as dead if its underlying vertex belongs to $D$).
There are $O(dp)$ affected trees and subtrees in total, because each original vertex has at most $p$ copies.
For convenience, from now on, we also refer to the entire affected trees that have no failed vertices as affected subtrees.

\medskip
\noindent
\textbf{Removing Invalid Artificial Edges.}
There are $O(pd)$ breaks (as well as these many intervals) in each affected component $\gamma$, including the break between any two consecutive concatenated lists $A(\gamma, \gamma_{j})$ and $A(\gamma, \gamma_{j+1})$, as well as the break caused when deletion of forest vertices/edges partitions an ancestor tree’s preorder into intervals.
We ignore every edge of $\Lambda_{d_\star}(A(\gamma))$ whose endpoints lie in distinct intervals.
Since $O(d_\star^2)$ edges cross each boundary and there are $O(pd)$ affected components, the number of edges of $H$ that must be ignored is bounded by $O(p^2d^2d_\star^2)$.
Let $H'$ be the graph obtained by removing these edges from $H$.

\medskip
\noindent
\textbf{Reconnecting Affected Subtrees.}
Let $R$ be a graph whose vertices correspond to the affected subtrees.
Using $\textbf{ET}(H,\calT)$, add an edge between $t,t'\in V(R)$ if an edge of $H'$ joins the corresponding subtrees.
The construction time is $O((pd)^2\log^2 n+p^2d^2d_\star^2)$: respectively, $O(pd)$ failed copies and incident forest edges, and $O(p^2d^2d_\star^2)$ nonforest edges are removed from $H$.
Note that there might be a tree $\tau \in \calT$ marked as affected that has no failed vertex.
So, we need to query $\textbf{ET}(H,\calT)$ for those trees as well, but the bound $O((pd)^2 \log^2 n)$ still remains valid since there are at most $O(pd)$ many such affected trees with no failed vertex.
Finally, compute the connected components of $R$ in $O((pd)^2)$ time.

\subsubsection{Answering Queries}

If either query vertex belongs to $D$, report that they are disconnected.
Otherwise, for $u,v\notin D$, perform the following steps to determine whether they are connected in $G-D$.

\medskip
\noindent
\textbf{Step 1.}
Find the components in $\calC$ that contain $u$ and $v$ as terminals.
Let $\gamma(u)$ and $\gamma(v)$ be those components. 
If $\gamma(u)$ is unaffected, let $\hat{\gamma}(u)$ be its highest unaffected ancestor.
We can find $\hat{\gamma}(u)$ in $O(\log\log n)$ time by binary-searching the ancestors of $\gamma(u)$.
Define $\hat{\gamma}(v)$ analogously.
If both $\hat{\gamma}(u)$ and $\hat{\gamma}(v)$ exist and are equal, report that $u$ and $v$ are connected and stop.

\medskip
\noindent
\textbf{Step 2.}
Find vertices $u'$ and $v'$ in affected subtrees that are connected to $u$ and $v$, respectively.
If $\gamma(u)$ is affected, set $u':=u$ (terminal copy of $u$ in $H$).
Otherwise, scan $A(\hat{\gamma}(u))$ for a vertex $u'\notin D$ adjacent to $V(\hat{\gamma}(u))$.
The vertices $u$ and $u'$ are connected because $\hat{\gamma}(u)$ is unaffected, and $u'$ belongs to an affected subtree because every proper ancestor of $\hat{\gamma}(u)$ is affected.
It takes $O(d)$ time to find $u'$, since it suffices to scan at most $d+1>|D|$ entries of $A(\hat{\gamma}(u))$.
If $u'$ does not exist, then no surviving edge leaves $V(\hat{\gamma}(u))$.
Define $v'$ analogously, and if either $u'$ or $v'$ does not exist, report that $u$ and $v$ are disconnected and stop.

\medskip
\noindent
\textbf{Step 3.}
Use $\textbf{ET}(H,\calT)$ to find the affected subtrees containing $u'$ and $v'$ in $O(\log (pd))$ time, and let $t_{u'}$ and $t_{v'}$ be the corresponding vertices of $R$.
Report that $u$ and $v$ are connected if and only if $t_{u'}$ and $t_{v'}$ lie in the same connected component of $R$.
This test takes $O(1)$ time.

\paragraph{Correctness.}
Every path in $R$ corresponds to a path in $G-D$; surviving forest and original edges represent valid edges of $G-D$, and an artificial edge labeled by an unaffected component $\gamma$ can be replaced by a path through $G[V(\gamma)]$.
Artificial edges with an affected label whose endpoints lie in distinct affected subtrees are ignored in the range queries.

Conversely, map the vertices of a path $P \subseteq G-D$ to their terminal copies and contract every maximal subpath lying in a highest unaffected component $\gamma$.
Consecutive affected subtrees are then either joined by an original edge of $P$, or separated by an unaffected component $\gamma$.
In the latter case, their external boundary terminals belong to $A(\gamma)$ and are connected by a path in $\Lambda_{d_\star}(A(\gamma))-D$.
Since every proper ancestor of $\gamma$ is affected, all surviving terminals on this artificial path belong to affected subtrees.
Thus the corresponding vertices are connected in $R$.

Hence, for any two surviving terminal copies $a, b$ contained in affected subtrees, their $R$-vertices are connected in $R$ if and only if the corresponding original vertices are connected in $G-D$.

%% file: appendix-lower.tex
\section{Optimality of Our Low-Degree Hierarchy}

In this section, we show that our construction of the low-degree hierarchy in \Cref{sec:hierarchy} is optimal.

\begin{lemma}\label{lem:lower}
    For any $k \geq 2$ and $\ell \geq 1$, there exists a graph $G$ with $ n = \Theta(k^\ell)$ nodes such that \textbf{any} low-degree hierarchy of $G$ either has max-degree $\geq k$ or has depth $\geq \ell = \Omega(\log_k n)$.
\end{lemma}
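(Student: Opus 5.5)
We use a complete $k$-ary tree. Let $G$ be the complete $k$-ary rooted tree of height $\ell$ (every internal node has exactly $k$ children), so $n=\Theta(k^\ell)$. Fix any $(p,s)$-low-degree hierarchy of $G$ with $s<k$; the aim is to prove $p\ge\ell$.

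\textbf{Key structural fact.} Since $G$ is a tree, every tree of every $\calT_i$ is a subtree of $G$, so $\deg_{\calT_i}(x)\le s<k$ for all $x$. Consider a component $\gamma\in\calC_i$ with $U(\gamma)\neq\emptyset$. Property (3) requires that all of $U(\gamma)$ lie in a single tree of $\calT_i$; that tree is a subtree of $G$. Because $G$ is a tree, it must contain the unique $G$-paths between terminals of $U(\gamma)$. Property (3) only requires that these paths avoid nothing, so they may pass through higher-level terminals; degree is the only constraint.

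\textbf{Induction.} Prove by induction on $h$ the claim: for every vertex $r$ of $G$ whose subtree $G_r$ has height $h$, some vertex of $G_r$ is a terminal at level $\ge h+1$. Equivalently, $G_r$ cannot be covered by terminals of levels $\le h$. Applying this at the root gives $p\ge \ell+1>\ell$, which suffices.

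For the base case $h=0$, the levels start at $1$, so the claim is trivial.

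For the inductive step, let $r$ have children $c_1,\dots,c_k$. By induction, each $G_{c_j}$ contains a terminal $t_j$ of level $\ge h$. Suppose for contradiction that every vertex of $G_r$ has level $\le h$. Then every $t_j$ has level exactly $h$, and $r$ has level $\le h$.

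Now consider the level-$h$ components. Removing $U_{>h}$ removes no vertex of $G_r$, so $G_r$ lies inside a single component $\gamma$ of $G-U_{>h}$. That component is a member of $\calC_h$, and all the $t_j$ lie in $U(\gamma)$. The tree of $\calT_h$ spanning them is a subtree of $G$ that contains $t_1,\dots,t_k$, which lie in $k$ distinct child subtrees of $r$. It therefore contains the paths $t_j\to r$, so $r$ has degree $\ge k$ in it, contradicting $s<k$.

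\textbf{Main obstacle.} The delicate step is justifying that $G_r$ lies in a single member $\gamma$ of $\calC_h$ with all the $t_j$ in $U(\gamma)$. This uses Property (1), which makes the members of $\calC_h$ exactly the connected pieces with no edges between them. It also needs care with the definition of $U(\gamma)$ as the vertices of $\gamma$ lying in none of its children, so that "level of a vertex" is well defined as the unique $i$ with $x\in U_i$. If the intended statement requires depth exactly $\ge\ell$ with max-degree $\ge k$, the bound above gives slack, and the constants in $n=\Theta(k^\ell)$ absorb the choice of height $\ell$ versus $\ell-1$.
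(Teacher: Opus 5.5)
Your proof is correct and follows the paper's argument. It uses the same complete $k$-ary tree (the paper's recursively built $G_\ell$), the same induction that $U_{>h}$ meets every copy of $G_h$, and the same contradiction from the root $r$ having degree at least $k$ in the tree of $\calT_h$ spanning $t_1,\dots,t_k$. One small remark: the definition does not guarantee that the component of $G-U_{>h}$ containing $G_r$ is itself a member of $\calC_h$, but you don't need it, since Property (3) is stated per connected component of $G-U_{>h}$, which is exactly how the paper applies it.
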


\noindent
Before we prove this lemma, we show how this implies that our hierarchy is optimal.

\begin{restatable}{theorem}{lowerbound}
\label{thm:lower}
    Assume that $s,p : \mathbb{N} \rightarrow \mathbb{N}$ are two non-decreasing functions.
    If there exists an algorithm (randomized or deterministic) that constructs a $(p(n),s(n))$-low-degree hierarchy on input graphs of $n$ nodes, then we must have that $(s(n)+1)^{p(n) + 1} =  \Omega(n)$.
    Accordingly, $ p(n) \cdot \log (s(n)) = \Omega(\log n)$.
\end{restatable}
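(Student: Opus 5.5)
The plan is to derive \Cref{thm:lower} directly from \Cref{lem:lower} by contradiction, using monotonicity of $s$ and $p$ to pick one concrete hard instance. Suppose the algorithm produces a $(p(n),s(n))$-low-degree hierarchy on every $n$-vertex input. Fix a large $n$ and apply \Cref{lem:lower} with $k:=s(n)+1\ge 2$ and with $\ell$ the largest integer such that the resulting graph has at most $n$ vertices. Since that graph has $\Theta(k^\ell)$ vertices, this choice gives $k^{\ell+1}=\Omega(n)$, i.e.\ $\ell+1\ge \log_k n - O(1)$. If the constant in $\Theta(k^\ell)$ is awkward, I would pad the graph with a path hanging off one vertex so that it has exactly $n$ vertices.

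\textbf{Padding.} Padding must not weaken the lemma, so I would check that a hierarchy of the padded graph restricts to a hierarchy of the original one. Intersect every component with the original vertex set. Paths in the trees that pass through the pendant path can simply be deleted, because that path is attached by a single cut vertex. The simpler alternative is to run the algorithm directly on the graph with $n':=\Theta(k^\ell)\le n$ vertices and use $s(n')\le s(n)$ and $p(n')\le p(n)$; this is exactly where non-decreasingness is used. I would take this alternative and avoid padding entirely.

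\textbf{Contradiction.} The algorithm outputs a hierarchy on the lemma's graph with degree at most $s(n)<k$. By \Cref{lem:lower}, its depth is then at least $\ell$. Hence $p(n)\ge p(n')\ge \ell$, so $(s(n)+1)^{p(n)+1}\ge k^{\ell+1}=\Omega(n)$. Randomization is irrelevant: any single output must be a valid hierarchy, and the lemma quantifies over all hierarchies.

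\textbf{Corollary.} Taking logarithms gives $(p(n)+1)\log(s(n)+1)\ge \log n - O(1)$. Using $p\ge 1$ and $s\ge 1$ (or handling $s=1$ separately, since degree-$1$ forests force large depth), we have $\log(s+1)=\Theta(\log s)$ for $s\ge 2$ and $p+1\le 2p$. This yields $p(n)\log s(n)=\Omega(\log n)$.

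The only real obstacle is the bookkeeping: matching the lemma's $n=\Theta(k^\ell)$ to an arbitrary $n$. The monotonicity argument above, choosing $\ell$ maximal with the lemma's graph fitting in $n$ vertices, handles it, and the constant lost is absorbed by the $\Omega$.
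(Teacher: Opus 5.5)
Your argument is correct and is essentially the paper's proof: both apply Lemma~\ref{lem:lower} with $k=s(n)+1$, run the algorithm on the hard instance of size $n'\le n$, and use $s(n')\le s(n)$ and $p(n')\le p(n)$. The paper fixes $\ell=p(n)+1$ and concludes $n<n'\le C(s(n)+1)^{p(n)+1}$; you instead fix $\ell$ maximal with $|V(G_\ell)|\le n$ and use $|V(G_{\ell+1})|\le 2k^{\ell+1}$, which is the same argument read in the other direction. One small correction: when $s(n)=1$ the literal bound $p(n)\log s(n)=\Omega(\log n)$ cannot be rescued by ``handling it separately'', since $\log 1=0$. The second claim should therefore be read as $p(n)\log(s(n)+1)=\Omega(\log n)$, which your computation does give.
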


\begin{proof}
    By \Cref{lem:lower} for $k = s(n) + 1$ and $\ell = p(n) + 1$, there exist a graph $G$ with $n' = \Theta((s(n)+1)^{p(n) + 1}) \leq C \cdot (s(n)+1)^{p(n) + 1}$ (for some pure constant $C$) nodes such that any low-degree hierarchy of $G$ either has maximum degree at least $s(n) + 1$ or has depth at least $p(n)+1$.
    Now, we show that $n' > n$.
    If $n' \leq n$, by running the existential algorithm on $G$, we achieve a low-degree hierarchy of $G$ with max-degree $\leq s(n') \leq s(n)$ and depth $\leq p(n') \leq p(n)$.
    This is clearly a contradiction.
    So, we have
    $$ n \leq n' \leq C \cdot (s(n)+1)^{p(n) + 1}. \qedhere $$
\end{proof}

This theorem shows that our construction of the low-degree hierarchy is optimal because any algorithm that constructs a low-degree hierarchy with an $O(1)$ degree parameter must have a depth of at least $\Omega(\log n)$.

\subsection{Proof of \Cref{lem:lower}}

We define $G$ inductively.
Let $G_0$ be a graph that consists of a single node $r_0$ called the \textbf{root} of $G_0$.
For any $i \geq 1$, we construct $G_i$ as follows;
\begin{enumerate}
    \item Consider a node $r_i$ as the root of $G_i$.

    \item Make $k$ copies of $G_{i-1}$ like $G_{i-1}^{(1)}, G_{i-1}^{(2)}, \ldots, G_{i-1}^{(k)}$.

    \item For each $j \in [1, k]$, connect $r_i$ to the root $r_{i-1}^{(j)}$ of $G_{i-1}^{(j)}$.
\end{enumerate}
The following figure illustrates $G_1$ and $G_2$ for $k=3$.

\input{fig-lower}

\noindent
Eventually, we define $G$ to be $G_\ell$.
It is straightforward to see that the number of nodes in $G$ is 
$ k^\ell + k^{\ell-1} + \cdots + 1 = \Theta(k^\ell)$.
Now, consider a $(p, s)$-low-degree hierarchy of $G$ with degree parameter $s < k$.
We show the depth of the hierarchy is at least $\ell$.
This directly follows from the claim below.

\begin{claim}
    For any $i \in [0, \ell]$, the low-degree hierarchy has depth at least $i$ and moreover $U_{>i}$ must contain at least one node from each copy of $G_i$ in the construction of $G$ (recall that there are $k^{\ell - i}$ copies of $G_i$ in total).
\end{claim}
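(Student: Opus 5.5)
We prove the claim by induction on $i$, with the hierarchy fixed and $s<k$ assumed throughout. The convention we use is that ``depth at least $i$'' means $p\ge i$. The statement about $U_{>i}$ therefore only has content when $i<p$; when $i\ge p$ we have $U_{>i}=\emptyset$, so the depth part must force $p>i$ whenever a copy of $G_i$ exists.

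The natural way to organize this is to prove, for each $i\in[0,\ell]$, that $p\ge i+1$ and that $U_{>i}$ meets every copy of $G_i$. For $i=\ell$ this gives $p\ge \ell+1$, which is a contradiction. The fix is to stop the induction at $i=\ell-1$, or to check the indexing of levels (level $i$ versus the sets $U_{>i}$) against the paper's convention. Either way, the lemma follows from having $U_{>i}\neq\emptyset$ for $i$ up to roughly $\ell-1$, because $U_{>i}\neq\emptyset$ forces $p>i$.

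\emph{Base case $i=0$.} Every copy of $G_0$ is a single vertex, and $V=\bigcup_j U_j$. Each vertex lies in $U_j$ for some $j\ge 1>0$, so it lies in $U_{>0}$.

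\emph{Inductive step.} Fix a copy $Q$ of $G_i$ with root $r$ and children $Q^{(1)},\dots,Q^{(k)}$, which are copies of $G_{i-1}$. Suppose for contradiction that $Q\cap U_{>i}=\emptyset$. By induction, each $Q^{(j)}$ contains a vertex $y_j\in U_{>i-1}$, and since $Q\cap U_{>i}=\emptyset$ we get $y_j\in U_i$. In $G-U_{>i}$ the whole of $Q$ survives and is connected, so all $y_1,\dots,y_k$ (together with $r$, if $r\in U_i$) lie in one component $C$ of $G-U_{>i}$.

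By Property (3), some tree $\tau$ of $\calT_i$ spans $U_i\cap V(C)\supseteq\{y_1,\dots,y_k\}$. The tree $\tau$ is a subgraph of $G$, and the $y_j$ lie in $k$ distinct subtrees hanging off $r$. In $G$, every path between two distinct $Q^{(j)}$ passes through $r$; this holds because $G$ is a tree in which $r$ separates its children's subtrees from each other, and also from the rest of $G$ above $r$, since the parent edge of $r$ goes upward. So $\tau$ contains $r$, and the paths in $\tau$ from $r$ to the $k$ vertices $y_j$ leave $r$ through $k$ distinct edges. Note that the part of $G$ above $r$ is reached only via the parent of $r$, which does not help reach any $Q^{(j)}$. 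Hence $\deg_\tau(r)\ge k>s$, contradicting the degree bound. Therefore $Q\cap U_{>i}\neq\emptyset$.

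\emph{Depth.} Since $U_{>i}\neq\emptyset$, we conclude $p>i\ge i$.

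The main obstacle is the step asserting that $\tau$ has degree at least $k$ at $r$. The tree $\tau$ is allowed to use vertices outside $C$ (the ``weak'' aspect), so the argument must rely only on the fact that $\tau$ is a subtree of the tree $G$. In a tree the path between any two $y_j$ is unique and passes through $r$ via the edges to the two corresponding child roots. Once this is checked, the argument gives the degree bound regardless of where $\tau$ goes.
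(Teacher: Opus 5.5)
Your base case and inductive step are correct and follow the paper's proof. Suppose a copy of $G_i$ avoids $U_{>i}$. The inductive hypothesis then puts a vertex of $U_i$ in each of the $k$ child copies. The whole copy lies in one component of $G-U_{>i}$, so a single tree of $\calT_i$ must span these vertices. Since that tree is a subgraph of the tree $G$, it must use all $k$ edges at the root, contradicting $s<k$. Your remark that only $\tau\subseteq G$ is needed (not $\tau\subseteq C$) is exactly what the paper relies on when it says the child copies attach to the rest of the graph only via the edge to $r_i$.

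The one thing that is wrong is your meta-remark that $i=\ell$ yields a contradiction. Concluding $p\ge \ell+1$ is consistent: nothing in \Cref{def:LowDegreeHierarchy} bounds the depth by $\ell$. In fact, depth $\ell+1$ is both forced and attainable when $s<k$. Take $U_j$ to be the vertices at height $j-1$ of the $k$-ary tree; then $p=\ell+1$ and every $\calT_j$ can be edgeless.

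So no ``fix'' is needed, and you should not stop the induction at $\ell-1$. The statement explicitly covers $i=\ell$, and your inductive step proves that case verbatim. The step uses only the hypothesis for $i-1$, which already gives $p\ge i$, so level $i$ exists and $U_{>i-1}=U_i\cup U_{>i}$. Truncating at $\ell-1$ would still give the lemma's bound $p\ge \ell$, but not the claim as stated.
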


\begin{proof}
    The claim is trivial for $i=0$ since $U_{>0}=V$.
    Now, assume that the claim holds for $i-1$; in particular, the hierarchy has depth at least $i-1$ and $U_{> i-1}$ is nonempty.
    Since $U_{> i-1}$ is non-empty, the hierarchy must have depth at least $i$.
    So, level $i$ exists and $U_{> i-1} = U_i \cup U_{> i}$.
    Consider one copy of $G_{i}$ in the construction of $G$ and consider the internal copies $G_{i-1}^{(1)},\ldots, G_{i-1}^{(k)}$ of $G_{i-1}$ inside $G_i$.
    According to the induction hypothesis $U_{>i-1}$ contains at least one node in each $G_{i-1}^{(j)}$ for $j \in [1, k]$.
    Now, for the sake of contradiction, assume that $U_{>i}$ does not contain any node in $G_i$.
    
    As a result, all the terminals in $U_{>i-1} \cap V(G_i)$ must belong to $U_i$    which means that each copy $G_{i-1}^{(j)}$ contains at least one node in $U_i$.
    Moreover, $\calT_i$ is a Steiner forest for $U_i$ w.r.t.~$G - U_{>i}$ and the entire $G_i$ is contained in one connected component of $G - U_{> i}$.
    We conclude that the terminals in $U_i \cap V(G_i)$ must be spanned by one connected component $T^\star$ (a tree) of $\calT_i$.
    Since $U_i$ has at least one node in each of the copies $G_{i-1}^{(j)}$, and these copies are connected to the rest of the graph only via the edge to the root $r_i$ of $G_i$, we conclude that the tree $T^\star$ must contain all the edges between $r_i$ and the roots of the copies $G_{i-1}^{(j)}$ for all $j \in [1, k]$.
    This is clearly in contradiction with the assumption that the maximum degree of $T^\star$ is $s < k$.
    Because this argument holds for every arbitrary copy of $G_i$, the proof is complete.
\end{proof}

%% file: fig-lower.tex
\begin{figure}[ht]
\caption{
An illustration of $G_1$ and $G_2$.}
  \centering

\tikzset{every picture/.style={line width=0.75pt}} 

\begin{tikzpicture}[x=0.75pt,y=0.75pt,yscale=-1,xscale=1]

\draw [color={rgb, 255:red, 0; green, 0; blue, 0 }  ,draw opacity=1 ][line width=0.75]    (459.33,165.33) -- (464.33,132) ;
\draw [color={rgb, 255:red, 0; green, 0; blue, 0 }  ,draw opacity=1 ][line width=0.75]    (464.33,132) -- (482.33,164.33) ;
\draw [color={rgb, 255:red, 0; green, 0; blue, 0 }  ,draw opacity=1 ][line width=0.75]    (438.33,163.33) -- (464.33,132) ;
\draw  [fill={rgb, 255:red, 0; green, 0; blue, 0 }  ,fill opacity=1 ] (460.33,132) .. controls (460.33,129.79) and (462.12,128) .. (464.33,128) .. controls (466.54,128) and (468.33,129.79) .. (468.33,132) .. controls (468.33,134.21) and (466.54,136) .. (464.33,136) .. controls (462.12,136) and (460.33,134.21) .. (460.33,132) -- cycle ;
\draw  [fill={rgb, 255:red, 0; green, 0; blue, 0 }  ,fill opacity=1 ] (478.33,164.33) .. controls (478.33,162.12) and (480.12,160.33) .. (482.33,160.33) .. controls (484.54,160.33) and (486.33,162.12) .. (486.33,164.33) .. controls (486.33,166.54) and (484.54,168.33) .. (482.33,168.33) .. controls (480.12,168.33) and (478.33,166.54) .. (478.33,164.33) -- cycle ;
\draw  [fill={rgb, 255:red, 0; green, 0; blue, 0 }  ,fill opacity=1 ] (455.33,165.33) .. controls (455.33,163.12) and (457.12,161.33) .. (459.33,161.33) .. controls (461.54,161.33) and (463.33,163.12) .. (463.33,165.33) .. controls (463.33,167.54) and (461.54,169.33) .. (459.33,169.33) .. controls (457.12,169.33) and (455.33,167.54) .. (455.33,165.33) -- cycle ;
\draw  [fill={rgb, 255:red, 0; green, 0; blue, 0 }  ,fill opacity=1 ] (434.33,163.33) .. controls (434.33,161.12) and (436.12,159.33) .. (438.33,159.33) .. controls (440.54,159.33) and (442.33,161.12) .. (442.33,163.33) .. controls (442.33,165.54) and (440.54,167.33) .. (438.33,167.33) .. controls (436.12,167.33) and (434.33,165.54) .. (434.33,163.33) -- cycle ;
\draw [color={rgb, 255:red, 0; green, 0; blue, 0 }  ,draw opacity=1 ][line width=0.75]    (385.33,165.33) -- (390.33,132) ;
\draw [color={rgb, 255:red, 0; green, 0; blue, 0 }  ,draw opacity=1 ][line width=0.75]    (390.33,132) -- (408.33,164.33) ;
\draw [color={rgb, 255:red, 0; green, 0; blue, 0 }  ,draw opacity=1 ][line width=0.75]    (364.33,163.33) -- (390.33,132) ;
\draw  [fill={rgb, 255:red, 0; green, 0; blue, 0 }  ,fill opacity=1 ] (386.33,132) .. controls (386.33,129.79) and (388.12,128) .. (390.33,128) .. controls (392.54,128) and (394.33,129.79) .. (394.33,132) .. controls (394.33,134.21) and (392.54,136) .. (390.33,136) .. controls (388.12,136) and (386.33,134.21) .. (386.33,132) -- cycle ;
\draw  [fill={rgb, 255:red, 0; green, 0; blue, 0 }  ,fill opacity=1 ] (404.33,164.33) .. controls (404.33,162.12) and (406.12,160.33) .. (408.33,160.33) .. controls (410.54,160.33) and (412.33,162.12) .. (412.33,164.33) .. controls (412.33,166.54) and (410.54,168.33) .. (408.33,168.33) .. controls (406.12,168.33) and (404.33,166.54) .. (404.33,164.33) -- cycle ;
\draw  [fill={rgb, 255:red, 0; green, 0; blue, 0 }  ,fill opacity=1 ] (381.33,165.33) .. controls (381.33,163.12) and (383.12,161.33) .. (385.33,161.33) .. controls (387.54,161.33) and (389.33,163.12) .. (389.33,165.33) .. controls (389.33,167.54) and (387.54,169.33) .. (385.33,169.33) .. controls (383.12,169.33) and (381.33,167.54) .. (381.33,165.33) -- cycle ;
\draw  [fill={rgb, 255:red, 0; green, 0; blue, 0 }  ,fill opacity=1 ] (360.33,163.33) .. controls (360.33,161.12) and (362.12,159.33) .. (364.33,159.33) .. controls (366.54,159.33) and (368.33,161.12) .. (368.33,163.33) .. controls (368.33,165.54) and (366.54,167.33) .. (364.33,167.33) .. controls (362.12,167.33) and (360.33,165.54) .. (360.33,163.33) -- cycle ;
\draw [color={rgb, 255:red, 0; green, 0; blue, 0 }  ,draw opacity=1 ][line width=0.75]    (314.33,163.33) -- (319.33,130) ;
\draw [color={rgb, 255:red, 0; green, 0; blue, 0 }  ,draw opacity=1 ][line width=0.75]    (319.33,130) -- (337.33,162.33) ;
\draw [color={rgb, 255:red, 0; green, 0; blue, 0 }  ,draw opacity=1 ][line width=0.75]    (293.33,161.33) -- (319.33,130) ;
\draw  [fill={rgb, 255:red, 0; green, 0; blue, 0 }  ,fill opacity=1 ] (315.33,130) .. controls (315.33,127.79) and (317.12,126) .. (319.33,126) .. controls (321.54,126) and (323.33,127.79) .. (323.33,130) .. controls (323.33,132.21) and (321.54,134) .. (319.33,134) .. controls (317.12,134) and (315.33,132.21) .. (315.33,130) -- cycle ;
\draw  [fill={rgb, 255:red, 0; green, 0; blue, 0 }  ,fill opacity=1 ] (333.33,162.33) .. controls (333.33,160.12) and (335.12,158.33) .. (337.33,158.33) .. controls (339.54,158.33) and (341.33,160.12) .. (341.33,162.33) .. controls (341.33,164.54) and (339.54,166.33) .. (337.33,166.33) .. controls (335.12,166.33) and (333.33,164.54) .. (333.33,162.33) -- cycle ;
\draw  [fill={rgb, 255:red, 0; green, 0; blue, 0 }  ,fill opacity=1 ] (310.33,163.33) .. controls (310.33,161.12) and (312.12,159.33) .. (314.33,159.33) .. controls (316.54,159.33) and (318.33,161.12) .. (318.33,163.33) .. controls (318.33,165.54) and (316.54,167.33) .. (314.33,167.33) .. controls (312.12,167.33) and (310.33,165.54) .. (310.33,163.33) -- cycle ;
\draw [color={rgb, 255:red, 0; green, 0; blue, 0 }  ,draw opacity=1 ][line width=0.75]    (387.33,76.33) -- (464.33,132) ;
\draw [color={rgb, 255:red, 0; green, 0; blue, 0 }  ,draw opacity=1 ][line width=0.75]    (387.33,76.33) -- (390.33,132) ;
\draw [color={rgb, 255:red, 0; green, 0; blue, 0 }  ,draw opacity=1 ][line width=0.75]    (319.33,130) -- (387.33,76.33) ;
\draw  [fill={rgb, 255:red, 0; green, 0; blue, 0 }  ,fill opacity=1 ] (383.33,76.33) .. controls (383.33,74.12) and (385.12,72.33) .. (387.33,72.33) .. controls (389.54,72.33) and (391.33,74.12) .. (391.33,76.33) .. controls (391.33,78.54) and (389.54,80.33) .. (387.33,80.33) .. controls (385.12,80.33) and (383.33,78.54) .. (383.33,76.33) -- cycle ;
\draw  [fill={rgb, 255:red, 0; green, 0; blue, 0 }  ,fill opacity=1 ] (289.33,161.33) .. controls (289.33,159.12) and (291.12,157.33) .. (293.33,157.33) .. controls (295.54,157.33) and (297.33,159.12) .. (297.33,161.33) .. controls (297.33,163.54) and (295.54,165.33) .. (293.33,165.33) .. controls (291.12,165.33) and (289.33,163.54) .. (289.33,161.33) -- cycle ;
\draw    (547.33,163.67) -- (509,163.98) ;
\draw [shift={(507,164)}, rotate = 359.53] [color={rgb, 255:red, 0; green, 0; blue, 0 }  ][line width=0.75]    (10.93,-3.29) .. controls (6.95,-1.4) and (3.31,-0.3) .. (0,0) .. controls (3.31,0.3) and (6.95,1.4) .. (10.93,3.29)   ;
\draw    (528.33,130.67) -- (490,130.98) ;
\draw [shift={(488,131)}, rotate = 359.53] [color={rgb, 255:red, 0; green, 0; blue, 0 }  ][line width=0.75]    (10.93,-3.29) .. controls (6.95,-1.4) and (3.31,-0.3) .. (0,0) .. controls (3.31,0.3) and (6.95,1.4) .. (10.93,3.29)   ;
\draw [color={rgb, 255:red, 0; green, 0; blue, 0 }  ,draw opacity=1 ][line width=0.75]    (65.33,153.33) -- (70.33,120) ;
\draw [color={rgb, 255:red, 0; green, 0; blue, 0 }  ,draw opacity=1 ][line width=0.75]    (70.33,120) -- (88.33,152.33) ;
\draw [color={rgb, 255:red, 0; green, 0; blue, 0 }  ,draw opacity=1 ][line width=0.75]    (44.33,151.33) -- (70.33,120) ;
\draw  [fill={rgb, 255:red, 0; green, 0; blue, 0 }  ,fill opacity=1 ] (66.33,120) .. controls (66.33,117.79) and (68.12,116) .. (70.33,116) .. controls (72.54,116) and (74.33,117.79) .. (74.33,120) .. controls (74.33,122.21) and (72.54,124) .. (70.33,124) .. controls (68.12,124) and (66.33,122.21) .. (66.33,120) -- cycle ;
\draw  [fill={rgb, 255:red, 0; green, 0; blue, 0 }  ,fill opacity=1 ] (84.33,152.33) .. controls (84.33,150.12) and (86.12,148.33) .. (88.33,148.33) .. controls (90.54,148.33) and (92.33,150.12) .. (92.33,152.33) .. controls (92.33,154.54) and (90.54,156.33) .. (88.33,156.33) .. controls (86.12,156.33) and (84.33,154.54) .. (84.33,152.33) -- cycle ;
\draw  [fill={rgb, 255:red, 0; green, 0; blue, 0 }  ,fill opacity=1 ] (61.33,153.33) .. controls (61.33,151.12) and (63.12,149.33) .. (65.33,149.33) .. controls (67.54,149.33) and (69.33,151.12) .. (69.33,153.33) .. controls (69.33,155.54) and (67.54,157.33) .. (65.33,157.33) .. controls (63.12,157.33) and (61.33,155.54) .. (61.33,153.33) -- cycle ;
\draw  [fill={rgb, 255:red, 0; green, 0; blue, 0 }  ,fill opacity=1 ] (40.33,151.33) .. controls (40.33,149.12) and (42.12,147.33) .. (44.33,147.33) .. controls (46.54,147.33) and (48.33,149.12) .. (48.33,151.33) .. controls (48.33,153.54) and (46.54,155.33) .. (44.33,155.33) .. controls (42.12,155.33) and (40.33,153.54) .. (40.33,151.33) -- cycle ;
\draw    (149.33,150.67) -- (111,150.98) ;
\draw [shift={(109,151)}, rotate = 359.53] [color={rgb, 255:red, 0; green, 0; blue, 0 }  ][line width=0.75]    (10.93,-3.29) .. controls (6.95,-1.4) and (3.31,-0.3) .. (0,0) .. controls (3.31,0.3) and (6.95,1.4) .. (10.93,3.29)   ;
\draw   (320.33,117.67) .. controls (333.33,118.67) and (349.33,154.67) .. (347.33,166.67) .. controls (345.33,178.67) and (306.33,185.67) .. (285.33,168.67) .. controls (264.33,151.67) and (307.33,116.67) .. (320.33,117.67) -- cycle ;
\draw   (390.33,121.67) .. controls (403.33,122.67) and (421.33,154.67) .. (418.33,168.67) .. controls (415.33,182.67) and (368.33,188.67) .. (357.33,172.67) .. controls (346.33,156.67) and (377.33,120.67) .. (390.33,121.67) -- cycle ;
\draw   (465.33,121.67) .. controls (478.33,122.67) and (496.33,154.67) .. (493.33,168.67) .. controls (490.33,182.67) and (442.33,185.67) .. (431.33,169.67) .. controls (420.33,153.67) and (452.33,120.67) .. (465.33,121.67) -- cycle ;

\draw (560,155) node [anchor=north west][inner sep=0.75pt]   [align=left] {copies of $\displaystyle G_{0}$};
\draw (538,120) node [anchor=north west][inner sep=0.75pt]   [align=left] {copies of $\displaystyle G_{1}$};
\draw (527,65.4) node [anchor=north west][inner sep=0.75pt]    {$G_{2}$};
\draw (398,61.4) node [anchor=north west][inner sep=0.75pt]    {$r_{2}$};
\draw (290,97.4) node [anchor=north west][inner sep=0.75pt]    {$r_{1}^{( 1)}$};
\draw (364,101.4) node [anchor=north west][inner sep=0.75pt]    {$r_{1}^{( 2)}$};
\draw (455,97.4) node [anchor=north west][inner sep=0.75pt]    {$r_{1}^{( 3)}$};
\draw (159,141) node [anchor=north west][inner sep=0.75pt]   [align=left] {copies of $\displaystyle G_{0}$};
\draw (121,90) node [anchor=north west][inner sep=0.75pt]   [align=left] {$\displaystyle G_{1}$};
\draw (67,95.4) node [anchor=north west][inner sep=0.75pt]    {$r_{1}$};
\draw (20,159.4) node [anchor=north west][inner sep=0.75pt]    {$r_{0}^{( 1)}$};
\draw (54,163.4) node [anchor=north west][inner sep=0.75pt]    {$r_{0}^{( 2)}$};
\draw (89.33,160.73) node [anchor=north west][inner sep=0.75pt]    {$r_{0}^{( 3)}$};

\end{tikzpicture}

\end{figure}

%% file: appendix-pseudocodes.tex
\section{Pseudocode of the Implementation of \Cref{thm:main-decomp}}\label{appendix:pseudocode}

\begin{algorithm}[H]
  \DontPrintSemicolon
  Let $V(T) \gets U$, $E(T) \gets \emptyset$, $V_a \gets V \setminus V(T)$, and $G_a \gets G[V_a]$.
  
  Initialize $E_b$ by scanning all the neighbors of each $x \in U$.
  
  \While{$E_b$ is not empty}{
    Pop an arbitrary $(u,x) \in E_b$.
    \If{$u$ and $x$ are low-degree vertices in $T$ and they belong to different connected components of $T$}{
        $E(T) \gets E(T) \cup \{(u,x)\}$.
        
        Update $E_b$ and the disjoint-set union structure.
    }\ElseIf{$u$ is a low-degree vertex in $T$ and $x \in V_a$}{
        Let $T_u$ be the connected component of $T$ containing $u$.

        Let $W \gets \emptyset$.

        Let $\mathit{found} \gets \mathrm{DFS}(x,T_u,W)$ (\Cref{alg:DFS}).

        Regardless of the value of $\mathit{found}$, remove every vertex in $W$ from $V_a$, and update $G_a$, $E_b$, and the disjoint-set union structure to reflect the outcome of the DFS.
    }
    
  }
  Let $X = \{v \in V(T) \mid \deg_{T}(v) = 4\}$.
  
  \Return $(X, T)$.
  
  \caption{Our algorithm for \Cref{thm:main-decomp}.}\label{alg:indirect}
\end{algorithm}

\begin{algorithm}[H]
  \DontPrintSemicolon
    Mark $x$ as visited and add it to $W$.
    
    Scan the neighbors of $x$ in $G$ to determine whether some neighbor $v$ is a low-degree vertex of $T$ in a component different from $T_u$.

    \If{such $v$ exists}{
        Let $(x_1,x_2,\ldots,x_q)$ be the path from the root $x_1$ of the DFS tree to the current $x = x_q$.
    
        Add $P = (u,x_1,\ldots,x_q, v)$ to $T$ as follows:
        
        $V(T) \gets V(T) \cup \{x_1,x_2,\ldots,x_q\}$
        
        $E(T) \gets E(T) \cup \{(u,x_1), (x_1,x_2),\ldots, (x_q,v)\}$.

        \Return \textsc{true}.
    }

    \For{each neighbor $z$ of $x$ in $G_a$}{
        \If{$z$ is not marked visited}{
            \If{$\mathrm{DFS}(z,T_u,W)$ returns \textsc{true}}{
                \Return \textsc{true}.
            }
        }
    }

    \Return \textsc{false}.
    
  \caption{$\mathrm{DFS}(x,T_u,W)$, which returns whether it finds a linking path; $W$ is the set of vertices visited by the current DFS.}\label{alg:DFS}
\end{algorithm}